\newcommand\numberthis{\addtocounter{equation}{1}\tag{\theequation}}
\theoremstyle{plain}
\newtheorem{theorem}{Theorem}
\newtheorem{lemma}[theorem]{Lemma}
\newtheorem{proposition}[theorem]{Proposition}
\theoremstyle{definition}
\newtheorem{assumption}[theorem]{Assumption}
\newcommand\xqed[1]{%
	\leavevmode\unskip\penalty9999 \hbox{}\nobreak\hfill\quad\hbox{#1}%
}
\newcommand\remarkend{\xqed{$\triangle$}}
\def\@endtheorem{\remarkend\endtrivlist\@endpefalse }
\theoremstyle{remark}
\def\@endtheorem{\endtrivlist\@endpefalse }
\crefname{theorem}{Theorem}{Theorems}
\crefname{lemma}{Lemma}{Lemmas}
\crefname{proposition}{Proposition}{Propositions}
\crefname{corollary}{Corollary}{Corollaries}
\crefname{definition}{Definition}{Definitions}
\crefname{assumption}{Assumption}{Assumptions}
\crefname{remark}{Remark}{Remarks}
\crefname{subsection}{subsection}{subsections}
\crefname{subsubsection}{subsection}{subsections}
\renewcommand{\d}[1]{\ensuremath{\operatorname{d}\!{#1}}}
\newcommand{\Tr}{\operatorname{Tr}}
\newcommand{\Supp}{\operatorname{Supp}}
\newcommand{\diver}{\operatorname{div}}
\newcommand{\cT}{\mathcal{T}}
\newcommand{\cM}{\mathcal{M}}
\newcommand{\cK}{\mathcal{K}}
\newcommand{\cI}{\mathcal{I}}
\newcommand{\cJ}{\mathcal{J}}
\newcommand{\R}{\mathbb{R}}
\newcommand{\s}{\textmd{\normalfont s}}
\newcommand{\ceqq}{\coloneqq}
\newcommand\mydots{\ifmmode\mathellipsis\else.\kern-0.08em.\kern-0.08em.\fi}
\title[Ground state energy of dilute Bose gases]{Ground state energy of dilute Bose gases with two-body and three-body interactions}
\author{François L. A. Visconti}
\address{Department of Mathematics, LMU Munich, Theresienstrasse 39, 80333 Munich, Germany}
\email{visconti@math.lmu.de}
\begin{document}
	\maketitle
	
	\section*{Abstract}
	We study dilute Bose gases in the thermodynamic limit interacting via two-body and three-body interaction potentials. We prove that the leading order of the thermodynamic ground state energy is entirely characterised by both the scattering length of the two-body potential and the scattering energy of the three-body potential. The corresponding result for two-body interactions was proven in seminal papers of Dyson (1957) \cite{Dyson1957gseHS} and Lieb--Yngvason (1998) \cite{Lieb1998GSE}, and the result for three-body interactions was proven much more recently by Nam--Ricaud--Triay (2022) \cite{Nam2022ground}. The present result resolves a conjecture of Nam--Ricaud--Triay (2022) \cite{Nam2022dilute}.
		
	\tableofcontents
	
	\section{Introduction}\label{sec1}
	
	In the study of dilute Bose gases, where particle collisions occur rarely, interactions between particles are often modelled through an effective two-body potential $V(x_1-x_2)$. However, this approximation cannot explain certain physical properties of the Bose gas \cite{surface_liquid_water_three_body,three_body_interactions_condensed_phases_helium} and a three-body correction $W(x_1-x_2,x_1-x_3)$ must therefore be added. Moreover, dilute Bose gases with attractive two-body interactions and repulsive three-body ones have gained considerable interest since three-body interactions can stabilise the condensate against a collapse induced by two-body interactions \cite{Gammal2000atomicBEC,Koch2008stabilisation,petrov2023beyondmeanfield}.
	
	In the case of Bose gases interacting only via a two-body potential $V$, it was proven in \cite{Dyson1957gseHS} (upper bound) and in \cite{Lieb1998GSE} (lower bound) that the ground state energy per unit volume in the thermodynamic limit satisfies
	\begin{equation}
		\label{eq:ground_state_energy_two_body}
		e_{\textmd{2B}}(\rho, V) = 4\pi a(V)\rho^2(1 + o(1))
	\end{equation}
	as $\rho a(V)^3\rightarrow0$. Here, $\rho$ designates the density of the system and $a(V)$ is the scattering length of the two-body potential; the limit $\rho a(V)^3\rightarrow 0$ corresponds to the dilute regime. See \cite{Basti2021secondOU,Fournais2020energyDB,Fournais2023energyDB2,Yau2009secondOU} for rigorous results on the next order corrections to \eqref{eq:ground_state_energy_two_body}.
	
	For systems interacting only via a three-body potential $W$, it was recently proven in \cite{Nam2022ground} that the thermodynamic ground state energy per unit volume satisfies
	\begin{equation}
		\label{eq:ground_state_energy_three_body}
		e_{\textmd{3B}}(\rho,W) = \dfrac{1}{6}b_\cM(W)\rho^3(1 + o(1))
	\end{equation}
	as $\rho b_\cM(W)^{3/4}\rightarrow 0$, where $\rho$ denotes the density of the system and $b_\cM(W)$ is the scattering energy associated to $W$; the limit $\rho b_\cM(W)^{3/4} \rightarrow0$ again corresponds to the dilute regime. The time dependent problem with the mean-field type potential $N^{6\beta - 2}W(N^\beta(x-y,x-z))$ for $\beta \geq 0$ small has already been studied in \cite{Adami2023microscopicDS,Chen2011quinticNLS,Chen2012secondOC,Chen2018TheDO,lee2021rateCT,Li2021derivationNS,Nam2019derivation3D,Rout2024microscopicDG,Yuan2015derivationQNLS}, the leading order of the ground state energy in the Gross--Pitaevskii limit $\beta=1/2$ was derived in \cite{Nam2023condensation}, and the aforementioned stabilisation of the condensate against collapse has been studied in \cite{Nguyen2023onedimensional,Nguyen2023stabilization}, in one and two dimensions, for soft scaling potentials.
	
	Though the works of Nam--Ricaud--Triay \cite{Nam2022ground,Nam2023condensation} focused on three-body interactions alone, the more physically relevant case is the one of combined two-body and three-body interactions, which is the focus of the present paper. More specifically, we derive the leading order of the thermodynamic ground state energy per unit volume of a dilute Bose gas interacting with combined two-body and three-body interactions. This generalises \eqref{eq:ground_state_energy_two_body} and \eqref{eq:ground_state_energy_three_body}, and proves \cite[Conjecture 7]{Nam2022dilute}.
	
	\subsection{Model presentation}
	
	We consider a system of $N$ bosons trapped in a box $\Lambda_L\coloneqq\left[-L/2,L/2\right]^3$ of side length $L > 0$ interacting with a two-body potential $V$ and a three-body potential $W$. The energy is described by the $N$-body Hamiltonian
	\begin{equation}
		\label{eq:hamiltonian}
		\displaystyle H_{N,L} = \sum_{i=1}^N-\Delta_i + \sum_{1\leq i<j\leq N}V(x_i - x_j) + \sum_{1\leq i<j<k\leq N}W(x_i-x_j,x_i-x_k)
	\end{equation}
	acting on the bosonic space $L^2_\textmd{s}(\Lambda_L^N) \coloneqq \bigotimes_{\textmd{sym}}^NL^2(\Lambda_L)$, where $-\Delta$ denotes the Laplacian with Neumann boundary condition on $\Lambda_L$ ($\nabla f\cdot\overrightarrow{n}=0$ on $\partial\Lambda_L$). Moreover, we consider interaction potentials satisfying the following assumptions.
	\begin{assumption}[Potentials]
		\label{assumption:potentials}
		The two-body potential $0\leq V\in L^1(\R^3)$ is radial with compact support. The three-body potential $0 \leq W \in L^1(\R^6)$ has compact support and satisfies the \textit{three-body symmetry} properties
		\begin{equation}
			\label{eq:three_body_symmetry}
			W(x,y) = W(y,x)\quad\text{and}\quad W(x-y,x-z) = W(y-x,y-z) = W(z-y,z-x).
		\end{equation}
		We fix $R_0 > 0$ such that $\Supp V\subset B(0,R_0)$ and $\Supp W\subset B(0,R_0)$.
	\end{assumption}
	
	Under these assumptions, $H_{N,L}$ can be defined as a positive
	self-adjoint operator by Friedrichs’ method, and it has compact resolvent.
	
	The thermodynamic ground state energy per unit volume $e(\rho,V,W)$ is defined as
	\begin{equation}
		\label{eq:thermodynamic_ground_state_energy}
		e(\rho,V,W) = \lim_{\substack{N\rightarrow\infty\\N/L^3\rightarrow\rho}}\inf_{\|\Psi\|=1}\dfrac{\left<\Psi,H_{N,L}\Psi\right>}{L^3}.
	\end{equation}
	This limit exists and is independent of boundary conditions; see for instance \cite{statistical_mechanics}. We will estimate $e(\rho,V,W)$ in terms of the scattering length of $V$ and the scattering energy of $W$.
	
	The scattering length $a(V)$ of $V$ is defined by
	\begin{equation*}
		8\pi a(V) = \inf_{g\in D^1(\mathbb{R}^3)} \int_{\mathbb{R}^3}\d{}x\left(2\left|\nabla g(x)\right|^2 + V(x)\left|1 - g(x)\right|^2\right),
	\end{equation*}
	where $D^1(\mathbb{R}^d)$ denotes the space of functions $g:\mathbb{R}^d \rightarrow \mathbb{C}$ in $L_{\textmd{loc}}^1(\mathbb{R}^d)$ vanishing at infinity and satisfying $\left|\nabla g\right|\in L^2(\mathbb{R}^d)$ (see \cite[Section 8.3]{lieb_loss}). In contrast, the \textit{three-body scattering energy} $b_\cM(W)$ of $W$ is defined by
	\begin{equation*}
		b_{\mathcal{M}}(W) = \inf_{g\in D^1(\mathbb{R}^6)} \int_{\mathbb{R}^6}\d{}\mathbf{x}\left(2\left|\mathcal{M}\nabla g(\mathbf{x})\right|^2 + W(\mathbf{x})\left|1 - g(\mathbf{x})\right|^2\right),
	\end{equation*}
	where the matrix $\mathcal{M}:\mathbb{R}^3\times\mathbb{R}^3\rightarrow\mathbb{R}^3\times\mathbb{R}^3$ is given by
	\begin{equation}
		\label{eq:three_body_symmetry_matrix_intro}
		\mathcal{M} =
		\dfrac{1}{2\sqrt{2}}
		\begin{pmatrix}
			\sqrt{3} + 1 & \sqrt{3} - 1\\
			\sqrt{3} - 1 & \sqrt{3} + 1
		\end{pmatrix}.
	\end{equation}
	The matrix $\cM$ arises naturally when rewriting the three-body scattering problem associated to $W$ in terms of relative coordinates (see Section~\ref{section:modified_scattering_energy}). We refer to \cite[Section 2.2]{Nam2023condensation} for a more exhaustive discussion on the matter
	
	\subsection{Dilute regime}
	
	It is well know that the length scale at which a two-body potential $V$ acts is characterised by its scattering length $a(V)$. Hence, for particles interacting only via two-body interactions, the dilute regime can be encoded in the limit
	\begin{equation}
		\label{eq:dilute_regime_two_body}
		\rho a(V)^3 \rightarrow 0,
	\end{equation}
	which expresses that the effective range of the interaction ($\sim a(V)$) is much smaller than the mean distance between particles ($\sim\rho^{-1/3}$). For systems interacting via three-body interactions, it was shown in \cite{Nam2022ground,Nam2023condensation} that $b_\cM(W)^{1/4}$ plays the same role as the scattering length. The dilute regime thus corresponds to the limit
	\begin{equation}
		\label{eq:dilute_regime_three_body}
		\rho b_\mathcal{M}(W)^{3/4}\rightarrow 0.
	\end{equation}
	
	Therefore, for particles interacting via both a two-body potential $V$ and a three-body potential $W$, the dilute regime can be encoded in the two limits \eqref{eq:dilute_regime_three_body} and \eqref{eq:dilute_regime_two_body}, which can be combined as
	\begin{equation}
		\label{eq:dilute_regime_intro}
		\rho \mathfrak{a}^3 \rightarrow 0,
	\end{equation}
	with $\mathfrak{a} = \max\{a(v),\rho b_\mathcal{M}(w)\}$. The parameter $\mathfrak{a}$ plays the role of an effective combined scattering length; we explain this in more detail in Section~\ref{section:ingredients_proof}.

	\subsection{Main result}
	
	Our main result is the following.
	\begin{theorem}
		\label{th:main_result}
		\sloppy Let $V$ and $W$ satisfy Assumption~\ref{assumption:potentials}. Define $Y = \rho\mathfrak{a}^3$ with $\mathfrak{a} = \max(a(V),\rho b_\mathcal{M}(W))$. Then, under the assumption that $R_0\mathfrak{a}^{-1}$ is bounded, the thermodynamic ground state energy per unit volume defined in \eqref{eq:thermodynamic_ground_state_energy} satisfies
		\begin{equation}
			\label{eq:thermodynamic_ground_state_energy_first_order}
			e(\rho,V,W) = \left(4\pi a(V)\rho^2 + \dfrac{1}{6}b_\mathcal{M}(W)\rho^3\right)\left(1 + \mathcal{O}(Y^\nu)\right),
		\end{equation}
		in the dilute limit $Y\rightarrow 0$, for some universal constant $\nu>0$ (independent of $V$ and $W$).
	\end{theorem}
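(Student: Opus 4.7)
My plan is to prove matching upper and lower bounds on $e(\rho,V,W)$ with relative error $\mathcal{O}(Y^\nu)$, combining the Dyson / Lieb--Yngvason strategy for the two-body term with the Nam--Ricaud--Triay strategy for the three-body term. The key physical observation is that in the dilute limit $Y\to 0$ the two-body and three-body correlations act on essentially disjoint configurations of particles, so their contributions decouple at leading order. The definition $\mathfrak{a} = \max(b(V),\rho b_\mathcal{M}(W))$ is precisely what is needed in order to obtain a common relative error $Y^\nu$ in both the two-body-dominated regime ($\mathfrak{a} = b(V)$, $Y=\rho b(V)^3$) and the three-body-dominated regime ($\mathfrak{a} = \rho b_\mathcal{M}(W)$, $Y=\rho^{4} b_\mathcal{M}(W)^{3}$), thereby matching the dilute parameters \eqref{eq:dilute_regime_two_body} and \eqref{eq:dilute_regime_three_body} respectively.

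\textbf{Upper bound.} For the upper bound I would use a Jastrow trial state
\[
\Psi_N(x_1,\dots,x_N) = \prod_{1\leq i<j\leq N} f_2(x_i-x_j)\,\prod_{1\leq i<j<k\leq N} f_3(x_i-x_j,\,x_i-x_k),
\]
where $f_2$ and $f_3$ are truncations, at radii $b_2$ and $b_3$, of the minimisers of the variational problems \eqref{eq:scattering_energy_intro} and \eqref{eq:modified_scattering_energy_intro}. Integrating by parts against the associated two-body and three-body scattering equations produces the desired main terms $\tfrac{1}{2}b(V)\rho^{2} + \tfrac{1}{6}b_\mathcal{M}(W)\rho^{3}$. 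The remaining contributions split into (i) self-interactions of each Jastrow factor on clusters of $\geq 4$ particles, (ii) cross-interactions between $f_2$ and $f_3$ on particle clusters hit by both types of bonds, and (iii) boundary terms at the truncation radii. All of these are made polynomially small in $Y$ by choosing $b_2,b_3$ as appropriate powers of $\mathfrak{a}$ and $\rho^{-1/3}$ and by using the norm bounds of \cref{assumption:potentials}.

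\textbf{Lower bound.} This is the hard part, and I follow the broad strategy of Lieb--Yngvason, adapted to three-body interactions by Nam--Ricaud--Triay. I first split the kinetic energy $T=\sum_i (-\Delta_i)$ as $T=\varepsilon_2 T + \varepsilon_3 T + (1-\varepsilon_2-\varepsilon_3)T$ with $\varepsilon_2,\varepsilon_3>0$ small. A two-body Dyson lemma, absorbing $\varepsilon_2 T$, replaces $V$ by a softer long-range effective potential $U_2$ with $\int U_2\approx b(V)$ supported on a scale $R_2\gg \mathfrak{a}$; similarly, the three-body Dyson lemma of \cite{ground_state_energy_low_density_bose_gas_three_body_interactions}, absorbing $\varepsilon_3 T$, replaces $W$ by a softer three-body potential $U_3$ with effective scattering energy $\approx b_\mathcal{M}(W)$ on a scale $R_3\gg \mathfrak{a}$. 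I then partition $\Lambda_L$ into cubes of intermediate side length $\ell$ with $R_2,R_3\ll \ell$ and use super-additivity of the ground state energy (with a small sliding correction) to reduce to a per-box problem. Inside each box, the softened Hamiltonian is bounded from below by its expectation on the constant condensate state, using the leftover kinetic fraction $(1-\varepsilon_2-\varepsilon_3)T$ to supply a spectral gap of order $\ell^{-2}$ that controls the non-condensed part via a Poincaré-type inequality; this produces the desired main terms. Finally, the free parameters $\varepsilon_2,\varepsilon_3,R_2,R_3,\ell$ are optimised as suitable powers of $Y$ to give the error $\mathcal{O}(Y^\nu)$ for some explicit $\nu>0$.

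\textbf{Main obstacle.} The central difficulty is that the 2B and 3B contributions sit at different orders in $\rho$ ($\rho^{2}$ versus $\rho^{3}$) yet must be captured with a common relative precision $Y^\nu$. This forces a simultaneous and balanced softening of both interactions; in particular, cross terms that arise when three particles cluster together while two of them are additionally within the support of the softened $U_2$ must be shown to be subleading, and this is where the hypotheses on $\|V\|_{L^1}$, $\|W\|_{L^1}$ and $\|W\|_{L^2L^1}$ in the statement enter. Combined with the intrinsic intricacy of the three-body Dyson lemma and the need for the residual kinetic energy to simultaneously absorb both softening errors while still supplying an adequate spectral gap, keeping all length and energy scales consistent across the two-body-dominated and three-body-dominated regimes is the technical heart of the proof.
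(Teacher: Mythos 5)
Your high-level physical picture is right (the common dilute parameter $Y$, the disjointness of two-body and three-body correlation clusters), but the concrete plan has a genuine error in the lower bound and an unsubstantiated, fundamentally different route for the upper bound.

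\textbf{Lower bound.} The split $T=\varepsilon_2 T + \varepsilon_3 T + (1-\varepsilon_2-\varepsilon_3)T$ with $\varepsilon_2,\varepsilon_3$ small would not recover the full scattering energies. Giving a Dyson lemma only a small fraction $\varepsilon$ of the kinetic energy yields $\varepsilon(-2\nabla\mathds{1}\nabla) + v = \varepsilon\bigl(-2\nabla\mathds{1}\nabla + v/\varepsilon\bigr) \geq \varepsilon\, b(v/\varepsilon)\, U$, and $\varepsilon\, b(v/\varepsilon)$ is strictly smaller than $b(v)$ by an $\mathcal{O}(1)$ amount for singular $v$ (for a hard core it equals $\varepsilon b(v)$). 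So your main terms would carry a relative error of order one, not $\mathcal{O}(Y^\nu)$. The paper's key technical novelty, which your proposal misses, is the \emph{combined} many-body Dyson lemma, \cref{lemma:dyson_many_body}: for a fixed particle $i$ the events ``exactly one close neighbour'' and ``exactly two close neighbours'' are mutually exclusive, so the cutoffs $F_{ij}$ and $\widetilde F_{ijk}$ obey the joint bound \eqref{eq:no_three_and_four_body_collision_bound}, and each Dyson lemma is fed essentially the \emph{whole} kinetic energy $p_i^2$ on its own disjoint region of configuration space. Only a tiny $\varepsilon\sim Y^{2-5\alpha}$ is withheld afterwards, for Temple's inequality, which keeps the relative loss $\mathcal{O}(Y^\nu)$.

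\textbf{Upper bound.} A multiplicative Jastrow ansatz $\prod f_2(x_i-x_j)\prod f_3(x_i-x_j,x_i-x_k)$ is a different route from the paper's, which states explicitly that it could not adapt Dyson's Jastrow analysis to the three-body case and instead builds a Fock-space trial state $\mathcal{W}(\sqrt{n}\varphi)\,e^{K}e^{\widetilde K}\,\Omega$ (Weyl shift, quadratic Bogoliubov-type transform for the two-body correlation, cubic transform with a particle-number cutoff for the three-body correlation), followed by Solovej's localisation back into the $n$-body space. The obstacle with your ansatz is that in $|\nabla_i\Psi|^2$ one picks up cross-gradient terms in which a two-body and a three-body factor, or two three-body factors, share two particles — e.g.\ $\nabla_i f_2(x_i-x_j)\cdot\nabla_i f_3(x_i-x_j,x_i-x_l)$ — and these do not average to zero. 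Your proposal asserts they are ``made polynomially small in $Y$'' without indicating the mechanism; for the class of potentials in \cref{assumption:potentials} (merely $L^1$ two-body, $L^2L^1$ three-body) this is a genuinely open technical step, not a routine estimate. It is not a priori wrong, but it is a proof strategy that the paper's authors say they could not carry through, and the rest of \cref{section:energy_upper_bound} is devoted precisely to the alternative Bogoliubov-type construction.
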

	
	This result justifies \cite[Conjecture 7]{Nam2022dilute}. To prove Theorem~\ref{th:main_result}, we prove a lower bound (see Section~\ref{section:proof_energy_lower_bound}) and a matching upper bound (see Section~\ref{section:energy_upper_bound}). The main difficulty in deriving \eqref{eq:thermodynamic_ground_state_energy_first_order} is to extract the correct correlation energy simultaneously from the two-body and three-body potentials. Here are some remarks on our result.
	\begin{enumerate}
		\item Though the simple choice $\rho \rightarrow 0$ might seem more natural, the dilute regime encapsulated in the limit $Y\rightarrow0$ is subtler and important when considering combined two-body and three-body interactions. Indeed, in the limit $\rho \rightarrow 0$, $4\pi a(V)\rho^2$ and $b_\mathcal{M}(W)\rho^3$ are obviously not of the same order, whereas they can be in the dilute regime $Y\rightarrow0$. A possible way to achieve the latter limit is to fix the potentials $V$ and $W$, and to consider the rescaled potentials $V_\alpha \coloneqq \alpha^{-2}V(\alpha^{-1}\cdot)$ and $W_\delta \coloneqq \delta^{-2}W(\delta^{-1}\cdot)$ for some parameters $\alpha,\delta > 0$. Then, $a(V_\alpha) = \alpha a(V)$ and $b_\mathcal{M}(W_\delta) = \delta^4b_\cM(W)$, and the limit simply corresponds to $\rho \alpha^3,\rho\delta^3 \rightarrow 0$.
		\item A simple adaptation of our proof yields the leading order of the ground state energy in the Gross--Pitaevskii limit. More specifically, for the Hamiltonian
		\begin{equation*}
			\tilde{H}_N = \sum_{i=1}^{N}-\Delta_i + \sum_{1\leq i<j\leq N}N^2V(N(x_i-x_j)) + \sum_{1\leq i<j<k\leq N}NW(N^{1/2}(x_i-x_j,x_i-x_k)),
		\end{equation*}
		defined on $L^2_\s\left(\Lambda_1^N\right)$, where $V$ and $W$ satisfy Assumption~\ref{assumption:potentials}, one can prove that
		\begin{equation*}
			E_N^\textmd{GP} = \left(4\pi a(V) + \dfrac{b_\mathcal{M}(W)}{6}\right)N + o(N)
		\end{equation*}
		\sloppy as $N\rightarrow \infty$, where $E_N^\textmd{GP}$ denotes the ground state energy of $\tilde{H}_N$ on $L^2_\textmd{s}(\Lambda_1^N)$.
		\item Our proof can easily be extended to hard-core interactions and assuming that $R_0\mathfrak{a}^{-1} \leq C$, for some constant $C$, the error depends only on $C$ and $Y^\nu$. In older versions of the present paper \footnote{Available at \href{https://arxiv.org/abs/2402.05646v2}{arXiv:2402.05646v2}.}, the upper bound was derived using a localisation argument and second quantisation techniques as in \cite{Nam2022ground}, and did not cover the hard-core case.
		\item For Bose gases with two-body interactions, Lee-Huang-Yang \cite{Lee1957eigenvaluesEB} predicted the expansion
		\begin{equation*}
			e_\textmd{2B}(\rho,V) = 4\pi a(V)\rho^2\left[1 + \dfrac{128}{15\sqrt{\pi}}\left(\rho a(V)^3\right)^{1/2} + \cdots\right]
		\end{equation*}
		as $\rho a(V)^3\rightarrow 0$. See \cite{Basti2021secondOU,Fournais2020energyDB,Fournais2023energyDB2,Yau2009secondOU} for rigorous results, \cite{Haberberger2023freeED} for a result at positive temperature and \cite{Fournais2022groundSE} for a related result in the 2D case. At first sight, one might think that since the second order of this expansion is proportional to $\rho^{5/2}$, it dominates the three-body term in \eqref{eq:thermodynamic_ground_state_energy_first_order} as it is proportional to $\rho^3$. This is however not so simple since the dilute regime we consider is more subtle than the simple limit $\rho \rightarrow 0$.
		\item As shown in \eqref{eq:thermodynamic_ground_state_energy_first_order}, the leading order of the ground state energy is entirely characterised by the respective scattering problems of $V$ and $W$, meaning that correlation between two-body and three-body interactions does not play a role. Heuristically, this decoupling can be explained by the different length scales at which the two potentials act.
	\end{enumerate}
	
	\subsection{Strategy of the proof}
	
	\label{section:ingredients_proof}
	
	An important length scale is the Gross--Pitaevskii (GP) length scale $\ell_\textmd{GP}$, at which the spectral gap of the kinetic operator is of the same order of magnitude as the interaction energy per particle. For a system of $n$ particles trapped in a box $\Lambda_\ell$ that interact via a two-body interaction potential $V$, this translates to
	\begin{equation*}
		na(\ell^2V(\ell\cdot)) \simeq 1 \Longleftrightarrow \ell \simeq \dfrac{1}{\sqrt{\rho a(V)}} = \dfrac{a(V)}{\sqrt{\rho a(V)^3}},
	\end{equation*}
	where we used the scaling property $a(\ell^2V(\ell\cdot)) = \ell^{-1}a(V)$, and that the density is $\rho = n\ell^{-3}$. For three-body interactions however, the Gross--Pitaevskii length scale corresponds to
	\begin{equation*}
		n^2b_\mathcal{M}(\ell^2W(\ell\cdot)) \simeq 1 \Longleftrightarrow \ell \simeq \dfrac{1}{\rho b_{\mathcal{M}}(W)^{1/2}} = \dfrac{b_\mathcal{M}(W)^{1/4}}{\rho b_\mathcal{M}(W)^{3/4}},
	\end{equation*}
	where we again used the scaling property $b_\mathcal{M}(\ell^2W(\ell\cdot)) = \ell^{-4}b_\mathcal{M}(W)$. In the combined case, we thus expect the spectral gap of the kinetic operator to be of the same order of magnitude as both the two-body contribution and the three-body one when both Gross--Pitaevskii length scales coincide, meaning that
	\begin{equation}
		\label{eq:scattering_energies_critical_cases}
		a(v) \simeq \rho b_\mathcal{M}(W).
	\end{equation}
	Note that this condition ensures that the leading order of the two-body and three-body energies - which are respectively of order $a(V)\rho^2$ and $b_\mathcal{M}(W)\rho^3$ (see \eqref{eq:ground_state_energy_two_body} and \eqref{eq:ground_state_energy_three_body})
	- are comparable. In order to properly encapsulate this, we therefore consider the combined Gross--Pitaevskii length scale
	\begin{equation}
		\label{eq:combined_gp_length_scale}
		\ell_\textmd{GP} \simeq \min\left(\dfrac{1}{\sqrt{\rho a(V)}},\dfrac{1}{\rho b_\mathcal{M}(W)^{1/2}}\right).
	\end{equation}
	The idea being that the spectral gap of the kinetic operator thus controls the interaction energies per particle of both potentials. In terms of the effective combined scattering length $\mathfrak{a} = \max(a(V),\rho b_\mathcal{M}(W))$ (note that $\rho b_\mathcal{M}(W)^{3/4}$ is dimensionless), the definition \eqref{eq:combined_gp_length_scale} rewrites as
	\begin{equation*}
		\ell_\textmd{GP} \simeq \dfrac{\mathfrak{a}}{\sqrt{\rho \mathfrak{a}^3}}.
	\end{equation*}
	\\
	
	\noindent
	\textit{Lower bound.} We combine the strategies of \cite{Lieb1998GSE,Nam2022ground,Nam2016large}. Namely, we divide the box $\Lambda_L = \left[-L/2,L/2\right]^3$ into smaller boxes of side length $\ell > 0$ and estimate the energy separately in each box. Each of them thus contains $n$ particles described by the Hamiltonian $H_{n,\ell}$ defined in \eqref{eq:hamiltonian}. By dilation, this is equivalent to considering the rescaled Hamiltonian
	\begin{equation*}
		\tilde{H}_{n,\ell} = \sum_{i=1}^n-\Delta_i + \sum_{1\leq i<j\leq n}\ell^2V(\ell(x_i-x_j)) + \sum_{1\leq i<j<k\leq n}\ell^2W(\ell(x_i-x_j,x_i-x_k))
	\end{equation*}
	\sloppy on $L^2_\textmd{s}(\Lambda_1^{n})$. Indeed, defining the unitary operator $\mathcal{U}:L^2_\textmd{s}(\Lambda_\ell^{n}) \rightarrow L^2_\textmd{s}(\Lambda_1^{n})$ that acts as $\mathcal{U}\Psi = \ell^{3n/2}\Psi(\ell\cdot)$, we have $H_{n,\ell} = \ell^{-2}\mathcal{U}^*\tilde{H}_{n,\ell}\mathcal{U}$ in the quadratic form sense on $L^2_\s(\Lambda_\ell^n)$.
	
	Since we consider nonnegative potentials, we can simply neglect the interaction between boxes without increasing the energy. Moreover, we use a convexity argument to control the number of particles in each box. We consider boxes with a side length much shorter than the Gross--Pitaevskii length scale: $\ell \ll \ell_\textmd{GP}$. At such length scales, the spectral gap of the kinetic operator is large enough for the interaction potentials to be treated as perturbations of the kinetic operator using Temple's inequality. However, we do not do so directly since naive perturbation theory is expected not to give a good result for singular potentials (see e.g. \cite{Lieb2005MathematicsBG}). To circumvent this, we first replace the interaction potentials by softer ones using a version of Dyson's lemma adapted to combined two-body and three-body interactions.
	
	Let us briefly explain our approach. The original idea of the two-body Dyson lemma \cite{Dyson1957gseHS,Lieb1998GSE} is to sacrifice some of the kinetic energy in order to bound the singular potential $\ell^2V(\ell\cdot)$ from below by a softer one $R^{-3}U(R^{-1}\cdot)$, for some $R \gg \ell^{-1}$, which satisfies $\int U = 1$. Similarly, the three-body version of Dyson's lemma \cite{Nam2023condensation} allows us to sacrifice part of the kinetic energy to bound the potential $\ell^2W(\ell\cdot)$ from below by a softer one $R^{-6}\tilde{U}(R^{-1}\cdot)$, which satisfies $\int\tilde{U} = 1$. The key argument in our lower bound is to distinguish between different particle configurations and to apply either lemma depending on that. Let us explain this in more detail. To obtain the leading order of the ground state energy of a system interacting solely via two-body interactions, it is sufficient to consider two particle collisions and to discard the higher-order ones (nonnegativity of the potential is crucial here). In a similar way, for three-body interactions, it is enough to consider only situations where three-particles are nearby and all the others are far away. As one can notice, both situations are essentially mutually exclusive, which allows us to apply either Dyson lemmas depending on the configuration. Hence, we can roughly bound $\tilde{H}_{n,\ell}$ from below by
	\begin{multline*}
		\varepsilon\sum_{i=1}^n-\Delta_i + \ell^{-1} 8\pi a(V)\sum_{1\leq i<j\leq n}R^{-3}U(R^{-1}(x_i-x_j))\\
		+ \ell^{-4}b_\mathcal{M}(W)\sum_{1\leq i<j<k\leq n}R^{-6}\tilde{U}(R^{-1}(x_i-x_j,x_i-x_k)),
	\end{multline*}
	where we kept $\varepsilon$ of the kinetic energy.
	
	\sloppy Then, we use the Temple inequality (see Lemma~\ref{th:temple_inequality}) for the soft potentials $R^{-3}U(R^{-1}\cdot)$ and $R^{-6}\tilde{U}(R^{-1}\cdot)$. To do so, the spectral gap of the kinetic operator must dominate the expected value of the interaction potentials:
	\begin{equation*}
		\varepsilon \gtrsim n^2\ell^{-1}8\pi a(V) + n^3\ell^{-4}b_\mathcal{M}(W),
	\end{equation*}
	which is made possible thanks to the assumption $\ell \ll \ell_\textmd{GP}$. Hence, Temple's inequality yields
	\begin{equation}
		\label{eq:energy_lower_bound_intro}
		\tilde{H}_{n,\ell} \geq \left(n^2\ell^{-1}4\pi a(V) + \dfrac{1}{6}n^3\ell^{-4}b_\mathcal{M}(W)\right)(1 + \mathcal{O}((\rho\mathfrak{a}^3)^\nu)),
	\end{equation}
	for some constant $\nu > 0$. This is the energy on a single box of side length $\ell$ and we need to multiply it by the number of boxes $(L/\ell)^3$ to obtain the energy on the large box $\Lambda_L$. Recalling that $n\simeq\rho \ell^3$ and $H_{n,\ell} = \ell^{-2}\mathcal{U}^*\tilde{H}_{n,\ell}\mathcal{U}$, we obtain the claim.
	
	\medskip
	\noindent
	\textit{Upper bound.} We use a Jastrow factor trial state
	\begin{equation}
		\label{eq:trial_state_upper_bound_strategy}
		\Psi_{N,L} = \prod_{1\leq i<j\leq N}f_{\ell_1}(x_i - x_j)\prod_{1\leq i<j<k\leq N}\tilde{f}_{\ell_2}(x_i - x_j, x_i - x_k).
	\end{equation}
	The function $f_{\ell_1}$ describes two-body correlations up to a distance $\ell_1$ and solves the truncated zero-energy scattering equation
	\begin{equation}
		\label{eq:two_body_scattering_equation_strategy}
		(-\Delta_1 - \Delta_2)f_{\ell_1}(x_1 - x_2) + (Vf_{\ell_1})(x_1 - x_2) = \varepsilon_{\ell_1}(x_1 - x_2),
	\end{equation}
	with $\varepsilon_{\ell_1}$ satisfying
	\begin{equation*}
		\int_{\R^3}\varepsilon_{\ell_1} = 8\pi a(V).
	\end{equation*}
	The function $\tilde{f}_{\ell_2}$ describes three-body correlations up to a distance $\ell_2$ and is solution to
	\begin{multline}
		\label{eq:three_body_scattering_equation_strategy}
		(-\Delta_1 - \Delta_2 - \Delta_3)\tilde{f}_{\ell_2}(x_1 - x_2, x_1 - x_3) + (W\tilde{f}_{\ell_2})(x_1 - x_2, x_1 - x_3)\\
		= \tilde{\varepsilon}_{\ell_2}(x_1 - x_2, x_1 - x_3),
	\end{multline}
	with $\tilde{\varepsilon}_{\ell_2}$ satisfying
	\begin{equation*}
		\int_{\R^6}\tilde{\varepsilon}_{\ell_2} = b_\cM(W).
	\end{equation*}
	To evaluate the energy of the trial state \eqref{eq:trial_state_upper_bound_strategy}, we begin by writing
	\begin{multline*}
		\langle\Psi_{N,L},H_{N,L}\Psi_{N,L}\rangle = N\left\langle\Psi_{N,L},-\Delta_1\Psi_{N,L}\right\rangle + \frac{N(N - 1)}{2}\left\langle\Psi_{N,L},V_{12}\Psi_{N,L}\right\rangle\\
		+ \frac{N(N - 1)(N - 2)}{6}\left\langle\Psi_{N,L},W_{123}\Psi_{N,L}\right\rangle
		\numberthis \label{eq:trial_state_energy_strategy},
	\end{multline*}
	where we used the notations $V_{12} = V(x_1 - x_2)$ and $W_{123} = W(x_1 - x_2, x_1 - x_3)$, and symmetry. For readability's sake we also use the notations
	\begin{equation*}
		f_{ij} = f_{\ell_1}(x_i - x_j) \quad \textmd{and} \quad \tilde{f}_{ijk} = \tilde{f}_{\ell_2}(x_i - x_j, x_i - x_k).
	\end{equation*}
	When computing \eqref{eq:trial_state_energy_strategy}, we get many different terms due to $-\Delta_{x_1}\Psi_{N,L}$ and the numerous products in \eqref{eq:trial_state_upper_bound_strategy}. The two main ones are
	\begin{equation*}
		A =  \dfrac{N(N - 1)}{2}\int_{\R^{3N}}\d{}\mathbf{x}_N\dfrac{2\vert\nabla_1f_{12}\vert^2 + V_{12}f_{12}^2}{f_{12}^2}\prod_{i<j}f_{ij}^2\prod_{i<j<k}\tilde{f}_{ijk}^2
	\end{equation*}
	and
	\begin{equation*}
		B = \dfrac{N(N - 1)(N - 2)}{6}\int_{\R^{3N}}\d{}\mathbf{x}_N\dfrac{2\vert\cM\nabla\tilde{f}_{123}\vert^2 + W_{123}\tilde{f}_{123}^2}{\tilde{f}_{123}^2}\prod_{i<j}f_{ij}^2\prod_{i<j<k}\tilde{f}_{ijk}^2,
	\end{equation*}
	where the matrix $\cM$ is given by \eqref{eq:three_body_symmetry_matrix_intro}. Using \eqref{eq:two_body_scattering_equation_strategy} we extract from $A$ the leading order contribution of $V$ and using \eqref{eq:three_body_scattering_equation_strategy} we extract from $B$ the leading order contribution of $W$. More precisely, using $0 \leq f_{\ell_1} \leq 1$ and \eqref{eq:two_body_scattering_equation_strategy}, we obtain
	\begin{equation}
		\label{eq:trial_state_two_body_contribution_bound}
		A \lesssim 4\pi a(V)N^2\int_{\R^{3(N - 1)}}\d{}x_2\dots\d{}x_N\prod_{2 \leq i< j\leq N}f_{ij}^2\prod_{2\leq i<j<k\leq N}\tilde{f}_{ijk}^2.
	\end{equation}
	Since the state $\Psi_{N,L}$ is \textit{not} normalised, we need to divide $A$ by $\Vert\Psi_{N,L}\Vert^2$ to obtain its actual contribution to the energy. Writing
	\begin{equation}
		\label{eq:trial_state_norm_squared_strategy}
		\Vert\Psi_{N,L}\Vert^2 = \int_{\R^{3N}}\d{}x_1\dots\d{}x_N\prod_{1\leq i<j\leq N}f_{ij}^2\prod_{1\leq i<j<k\leq N}\tilde{f}_{ijk}^2,
	\end{equation}
	we see that this is almost equal to the integral in the right-hand side of \eqref{eq:trial_state_two_body_contribution_bound}, at the exception that there is an extra integral over $x_1$ in \eqref{eq:trial_state_norm_squared_strategy}. To put \eqref{eq:trial_state_norm_squared_strategy} in the correct form, we would like to use the estimate
	\begin{equation}
		\label{eq:trial_state_norm_squared_bad_lower_bound}
		1 - \sum_{j = 2}^N(1 - f_{1j}^2) - \sum_{2\leq j<k\leq N}(1 - \tilde{f}_{1jk}^2) \leq \prod_{j = 2}^Nf_{1j}^2\prod_{2\leq j<k\leq N}\tilde{f}_{1jk}^2,
	\end{equation}
	since, assuming that the two negative terms in the left-hand side are small, this would give us
	\begin{equation}
		\label{eq:trial_state_two_body_contribution_bound_normalised}
		\dfrac{A}{\Vert\Psi_{N,L}\Vert^2} \lesssim 4\pi a(V)\rho N,
	\end{equation}
	with $\rho = N/L^3$. The issue here is that the double sum in \eqref{eq:trial_state_norm_squared_bad_lower_bound} contains too many terms and is thus not small. To circumvent this problem, we use an idea from \cite{Junge2024gse3B}. Namely, thanks to the estimate
	\begin{equation*}
		\tilde{f}_{\ell_2}(x_1 - x_j,x_1 - x_k) \geq \max\left\{\mathds{1}_{\{\vert x_1 - x_j\vert \geq C\ell_2\}},\mathds{1}_{\{\vert x_1 - x_k\vert \geq C\ell_2\}}\right\}
	\end{equation*}
	(see Lemma~\ref{lemma:truncated_three_body_scattering_solution}), we have
	\begin{equation*}
		\prod_{j = 2}^N\mathds{1}_{\{\vert x_1 - x_j\vert \geq C\ell_2\}} \leq \prod_{2\leq j<k\leq N}\tilde{f}_{1jk}^2
	\end{equation*}
	and therefore
	\begin{equation}
		\label{eq:trial_state_norm_squared_good_lower_bound}
		1 - \sum_{j = 2}^N(1 - f_{1j}^2) - \sum_{j = 2}^N\mathds{1}_{\{\vert x_1 - x_j\vert \leq C\ell_2\}} \leq \prod_{j = 2}^Nf_{1j}^2\prod_{2\leq j<k\leq N}\tilde{f}_{1jk}^2.
	\end{equation}
	Compared to \eqref{eq:trial_state_norm_squared_bad_lower_bound}, we traded a double sum for a single sum, which turns out to be sufficient to correctly derive the estimate \eqref{eq:trial_state_two_body_contribution_bound}. Following a similar reasoning for $B$ and for the other terms, we then get
	\begin{equation*}
		\dfrac{\langle\Psi_{N,L},H_{N,L}\Psi_{N,L}\rangle}{\Vert\Psi_{N,L}\Vert^2} \leq N\left(4\pi a(V)\rho + \frac{1}{6}b_\cM(W)\rho^2\right)(1 + \mathcal{O}((\rho\mathfrak{a}^3)^\nu)),
	\end{equation*}
	for some constant $\nu > 0$. Diving both sides by $L^3$ and taking the thermodynamic limit, we obtain the desired claim.
	
	\medskip
	\noindent
	\textbf{Organisation of the paper.} The proof of Theorem~\ref{th:main_result} occupies the rest of this paper. In Section~\ref{section:scattering_energy}, we discuss known facts about the scattering properties of two-body and three-body potentials. Afterwards, in Section~\ref{section:lower_bound} we derive a version of Dyson's lemma that will be used to replace the potentials $V$ and $W$ by softer ones. Then, we use Temple's inequality to prove the energy lower bound in Theorem~\ref{th:main_result}. Finally, the corresponding energy upper bound is proven in Section~\ref{section:energy_upper_bound} by constructing an appropriate trial state.
	
	\medskip
	\noindent
	\textbf{Notations.} We will often omit integration variables, like $\d{}x$, when there is no ambiguity. We will also often use bold variables such as $\mathbf{x}$ to denote variables in $\R^6$.
	
	\medskip
	\noindent
	\textbf{Acknowledgments.} The author would like to express his sincere gratitude to Arnaud Triay and Phan Thành Nam for their continued support and guidance. The author also thanks Julien Ricaud and the anonymous referees for their precious feedback. Partial support by the Deutsche Forschungsgemeinschaft (DFG, German Research Foundation) through the TRR 352 Project ID. 470903074 and by the European Research Council through the ERC CoG RAMBAS  Project Nr. 101044249 is acknowledged.
	
	\section{Scattering energy}\label{sec2}
	
	\label{section:scattering_energy}
	
	In this section we recall some known results on the zero-scattering problem associated to the two-body potential $V$ (see for instance \cite[Appendix C]{Lieb2005MathematicsBG} or \cite[Section 2.2]{Nam2022Bogoliubov}). We also recall some properties of the zero-scattering problem associated to the three-body potential $W$, which were proven in \cite[Section 2.2]{Nam2023condensation}.
	
	\subsection{Truncated two-body scattering solution}
	
	\label{section:truncated_scattering_solution}
	
	Recall that the scattering length $a(V)$ of $V$ is defined by
	\begin{equation}
		\label{eq:scattering_length}
		8\pi a(V) = \inf_{g\in D^1(\mathbb{R}^3)} \int_{\mathbb{R}^3}\d{}x\left(2\left|\nabla g(x)\right|^2 + V(x)\left|1 - g(x)\right|^2\right).
	\end{equation}
	\begin{lemma}[Truncated two-body scattering solution]
		\label{lemma:truncated_two_body_scattering_solution}
		Let $V$ be a two-body potential satisfying Assumption~\ref{assumption:potentials}. Then, the variational problem \eqref{eq:scattering_length} has a unique minimiser $\omega$. Let $0 \leq \chi\leq 1$ be a smooth radial function such that $\chi(x) = 1$ if $|x|\leq 1/2$ and $\chi(x) = 0$ if $|x|\geq 1$. For any $\ell>0$, we define
		\begin{equation*}
			\chi_\ell = \chi(\ell^{-1}\cdot), \quad \omega_{\ell} = \chi_\ell\omega \quad \textmd{and} \quad f_{\ell} = 1 - \omega_{\ell}.
		\end{equation*}
		Then, for any $\ell\geq 2R_0$, the function $f_\ell$ is solution to
		\begin{equation}
			\label{eq:truncated_scattering_equation}
			-2\Delta f_\ell = -Vf_\ell + \varepsilon_{\ell}
		\end{equation}
		on $\R^3, $with $\varepsilon_\ell$ satisfying
		\begin{equation}
			\label{eq:truncated_scattering_solution_error_estimate}
			\vert\varepsilon_\ell(x)\vert \leq Ca(V)\ell^{-3}\mathds{1}_{\left\{\vert x\vert \leq \ell\right\}} \quad \textmd{and} \quad \int_{\R^3}\varepsilon_{\ell} = 8\pi a(V).
		\end{equation}
		Moreover, we have the pointwise estimates
		\begin{equation}
			\label{eq:truncated_scattering_solution_pointwise_estimate}
			\quad 0\leq f_\ell(x) \leq 1, \quad 0 \leq \omega_{\ell}(x) \leq a(V)\dfrac{\mathds{1}_{\{\vert x\vert \leq \ell\}}}{\vert x\vert}, \quad \vert\nabla f_\ell(x)\vert \leq Ca(V)\dfrac{\mathds{1}_{\{\vert x\vert \leq \ell\}}}{\vert x\vert^2}
		\end{equation}
		and
		\begin{equation}
			\label{eq:truncated_scattering_solution_pointwise_estimate_2}
			0 \leq 1 - f_\ell^2(x) \leq Ca(V)\dfrac{\mathds{1}_{\{\vert x\vert \leq \ell\}}}{\vert x\vert},
		\end{equation}
		for all $x\in\mathbb{R}^3$.
	\end{lemma}
	
	\subsection{Three-body scattering problem}
	
	\label{section:modified_scattering_energy}
	
	Let $W:\mathbb{R}^6\rightarrow\mathbb{R}_+$ satisfy the three-body symmetry \eqref{eq:three_body_symmetry} and let us briefly explain why the three-body scattering problem in $\R^9$ associated to $W(x_1 - x_2, x_1 - x_3)$ naturally corresponds to an effective scattering problem in $\R^6$ (see \cite[Section 2.2]{Nam2023condensation} for more detail). We begin by considering the three-body operator
	\begin{equation*}
		-\Delta_{x_1} -\Delta_{x_2} -\Delta_{x_3} + W(x_1 - x_2, x_1 - x_3) \quad \textmd{on $L^2\left(\R^9\right)$}.
	\end{equation*}
	Using the change of variables
	\begin{equation}
		\label{eq:centre_of_mass_removal}
		r_1 = \dfrac{1}{3}\left(x_1 + x_2 + x_3\right), \quad r_2 = x_1 - x_2 \quad \textmd{and} \quad r_3 = x_1 - x_3,
	\end{equation}
	and removing the centre of mass, we obtain the two-body operator
	\begin{equation*}
		-2\Delta_{\mathcal{M}} + W(x,y) \quad \textmd{on $L^2\left(\R^6\right)$},
	\end{equation*}
	where the matrix $\mathcal{M}:\mathbb{R}^3\times\mathbb{R}^3\rightarrow\mathbb{R}^3\times\mathbb{R}^3$ is given by
	\begin{equation}
		\label{eq:three_body_symmetry_matrix}
		\mathcal{M} = \left(\dfrac{1}{2}
		\begin{pmatrix}
			2 & 1\\
			1 & 2
		\end{pmatrix}\right)^{1/2} = \dfrac{1}{2\sqrt{2}}
		\begin{pmatrix}
			\sqrt{3} + 1 & \sqrt{3} - 1\\
			\sqrt{3} - 1 & \sqrt{3} + 1
		\end{pmatrix},
	\end{equation}
	and $-\Delta_{\mathcal{M}}$ is defined as
	\begin{equation*}
		-\Delta_{\mathcal{M}} = \left|\mathcal{M}\nabla_{\mathbb{R}^6}\right|^2 = \diver\left(\mathcal{M}^2\nabla_{\mathbb{R}^6}\right).
	\end{equation*}
	Hence, it is natural to define the \textit{three-body scattering energy}
	\begin{equation}
		\label{eq:modified_scattering_energy}
		b_{\mathcal{M}}(W) = \inf_{g\in D^1(\mathbb{R}^6)}\int_{\mathbb{R}^6}\d{}\mathbf{x}\left(2\left|\mathcal{M}\nabla g(\mathbf{x})\right|^2 + W(\mathbf{x})\left|1 - g(\mathbf{x})\right|^2\right).
	\end{equation}
	\begin{lemma}[Truncated three-body scattering solution]
		\label{lemma:truncated_three_body_scattering_solution}
		Let $W$ be a three-body potential satisfying Assumption~\ref{assumption:potentials}. Then, the variational problem \eqref{eq:modified_scattering_energy} has a unique minimiser $\tilde{\omega}$. Let $0 \leq \chi \leq 1$ be a smooth radial function such $\chi(\mathbf{x}) = 1$ if $\vert\mathbf{x}\vert \leq 1/2$ and $\chi(\mathbf{x}) = 0$ if $\vert\mathbf{x}\vert \geq 1$, and define $\tilde{\chi} \coloneqq \chi(\cM^{-1}\cdot)$. For any $\ell > 0$, we define
		\begin{equation*}
			\tilde{\chi}_\ell = \tilde{\chi}(\ell^{-1}\cdot), \quad \tilde{\omega}_\ell = \tilde{\chi}_\ell\tilde{\omega} \quad \textmd{and} \quad \tilde{f}_\ell = 1 - \tilde{\omega}_\ell,
		\end{equation*}
		which all satisfy the three-body symmetry \eqref{eq:three_body_symmetry}. Then, for any $\ell \geq 2R_0$, the function $\tilde{f}_\ell$ is solution to
		\begin{equation}
			\label{eq:three_body_scattering_equation}
			-\Delta_\cM \tilde{f}_\ell = -W\tilde{f}_\ell + \tilde{\varepsilon}_\ell
		\end{equation}
		on $\R^6$, with $\tilde{\varepsilon}_\ell$ satisfying
		\begin{equation}
			\label{eq:three_body_scattering_solution_error_estimate}
			\vert\tilde{\varepsilon}_\ell(\mathbf{x})\vert \leq Cb_\cM(W)\ell^{-6}\mathds{1}_{\{\vert\mathbf{x}\vert \leq C\ell\}} \quad \textmd{and} \quad \int_{\R^6}\tilde{\varepsilon}_\ell = b_\cM(W).
		\end{equation}
		Moreover, we have the pointwise estimates
		\begin{equation}
			\label{eq:three_body_scattering_solution_pointwise_estimate}
			0 \leq \tilde{f}_\ell(\mathbf{x}) \leq 1, \quad \vert\tilde{\omega}(\mathbf{x})\vert \leq Cb_\cM(W)\dfrac{\mathds{1}_{\{\vert\mathbf{x}\vert \leq C\ell\}}}{\vert\mathbf{x}\vert^4},
		\end{equation}
		\begin{equation}
			\label{eq:three_body_scattering_solution_pointwise_estimate2}
			\vert\nabla\tilde{f}_\ell(\mathbf{x})\vert \leq Cb_\cM(W)\dfrac{\mathds{1}_{\{\vert\mathbf{x}\vert \leq C\ell\}}}{\vert\mathbf{x}\vert^5} \quad \textmd{and} \quad 0 \leq 1 - \tilde{f}_\ell^2(\mathbf{x}) \leq Cb_\cM(W)\dfrac{\mathbf{1}_{\{\vert\mathbf{x}\vert \leq \ell\}}}{\vert\mathbf{x}\vert^4},
		\end{equation}
		for all $\mathbf{x}\in\R^6$. Furthermore, by defining $\tilde{\ell} = \sqrt{3/2}\ell$ and
		\begin{equation*}
			\tilde{g}_\ell(x) = \mathds{1}_{\{\vert x\vert \geq \tilde{\ell}\}},
		\end{equation*}
		we have
		\begin{equation}
			\label{eq:three_body_scattering_solution_lower_bound_disentangling}
			\tilde{f}_\ell(x_1,x_2) \geq \max(\tilde{g}_\ell(x_1),\tilde{g}_\ell(x_2)),
		\end{equation}
		for all $x_1,x_2\in\R^3$.
	\end{lemma}
	
	\begin{proof}
		Up to \eqref{eq:three_body_scattering_solution_pointwise_estimate2} included, everything in Lemma~\ref{lemma:truncated_three_body_scattering_solution} was shown in \cite[Theorem 8]{Nam2023condensation}, or is proven analogously.
		
		The estimate \eqref{eq:three_body_scattering_solution_lower_bound_disentangling} is taken from \cite[Lemma 2]{Junge2024gse3B} and directly follows from the fact that $\tilde{f}_\ell(x_1,x_2) = 1$ when $\vert\cM^{-1}(x_1,x_2)\vert ^{-1} \geq \ell$, which is true whenever $\vert x_1\vert \geq \tilde{\ell}$ or $\vert x_2\vert \geq \tilde{\ell}$.
	\end{proof}

	\section{Lower bound}
	\label{section:lower_bound}
	
	In this section we prove the lower bound
	\begin{equation*}
		\label{eq:ground_state_energy_lower_bound}
		e(\rho,V,W) \geq \left(4\pi a(V)\rho^2 + \dfrac{1}{6}b_\mathcal{M}(W)\rho^3\right)\left(1+\mathcal{O}(Y^\nu)\right) \quad \textmd{when $Y\coloneqq\rho \mathfrak{a}^3\rightarrow 0$},
	\end{equation*}
	where we recall that $\mathfrak{a} = \max(a(V),\rho b_\mathcal{M}(W))$. Here, $\nu > 0$ is a universal constant. We follow the general strategy of \cite{Lieb1998GSE,Nam2016large,Nam2022ground}, that is we estimate the energy in the box $[-L/2,L/2]^3$ by the sum of the energies in smaller boxes $[-\ell/2,\ell/2]^3$ with Neumann boundary conditions. We discard the interactions between the different boxes using the nonnegativity of the potentials $V$ and $W$, and we control the number of particles in each box using a superadditivity argument. The energy in each of these boxes is described by the Hamiltonian $H_{n,\ell}$ defined in \eqref{eq:hamiltonian}, where $n$ denote the number of particles in each box. For convenience, we introduce the unitary operator $\mathcal{U}:L^2_\textmd{s}(\Lambda_\ell^{n}) \rightarrow L^2_\textmd{s}(\Lambda_1^{n})$ that acts as $\mathcal{U}\Psi = \ell^{3n/2}\Psi(\ell\cdot)$, and the rescaled Hamiltonian $\tilde{H}_{n,\ell}=\ell^{2}\mathcal{U}H_{n,\ell}\mathcal{U}^*$. In other words, we consider
	\begin{equation}
		\label{eq:hamiltonian_rescaled}
		\tilde{H}_{n,\ell} = \sum_{i=1}^n-\Delta_i + \sum_{1\leq i<j\leq n}\ell^2V(\ell(x_i-x_j)) + \sum_{1\leq i<j<k\leq n}\ell^2W(\ell(x_i-x_j,x_i-x_k))
	\end{equation}
	acting on $L^2_\s(\Lambda_1^{n})$, where $-\Delta$ denotes the Neumann Laplacian on $L^2(\Lambda)$. In the following proposition we extract the leading order contribution of $\tilde{H}_{n,\ell}$.
	\begin{proposition}[Energy at short length scales]
		\label{prop:energy_short_length}
		Let $V$ and $W$ satisfy Assumption~\ref{assumption:potentials}. Define $\mathfrak{a} = \max(a(V),\rho b_\mathcal{M}(W))$ and $Y = \rho \mathfrak{a}^3$. Let
		\begin{equation*}
			\dfrac{1}{3} < \alpha < \dfrac{12}{35}, \quad \ell\sim \mathfrak{a}Y^{-\alpha} \quad \textmd{and} \quad 0\leq n\leq 10\rho \ell^3.
		\end{equation*}
		Then, the rescaled Hamiltonian defined in \eqref{eq:hamiltonian_rescaled} is such that
		\begin{equation*}
			\displaystyle\tilde{H}_{n,\ell} \geq \dfrac{4\pi a(V)}{\ell}n(n-1) + \dfrac{b_\mathcal{M}(W)}{6\ell^4}n(n-1)(n-2) + \mathcal{O}\left(\rho^2\mathfrak{a}\ell^5Y^\nu\right)
		\end{equation*}
		as $Y \rightarrow 0$, for some constant $\nu>0$, under the assumption that $R_0/\mathfrak{a}$ remains bounded.
	\end{proposition}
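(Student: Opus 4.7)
The strategy is to reduce to a single rescaled box of side one, where the spectral gap of the Neumann Laplacian on $\Lambda$ is a fixed positive constant, and to exploit the smallness of the dilute parameter via a Dyson renormalisation followed by Temple's inequality. The choice $\ell \sim \mathfrak{a} Y^{-\alpha}$ with $\alpha > 1/3$ ensures $n\sim\rho\ell^3 = Y^{1-3\alpha}$, so that after renormalisation both $n^2\ell^{-1}b(V)$ and $n^3\ell^{-4}b_\mathcal{M}(W)$ can be made small compared to unity, which is the regime where Temple's inequality yields useful bounds. The upper bound $n \leq 10\rho\ell^3$ is used through a convexity/superadditivity argument, reducing to the case $n \sim \rho\ell^3$ and removing any dependence on the precise particle number.

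My first step is to establish a combined Dyson lemma handling the two-body and three-body interactions simultaneously. I would split off a small fraction $\varepsilon$ of the kinetic energy and use the remaining $(1-\varepsilon)\sum_i(-\Delta_{x_i})$ to bound the singular potentials $\ell^2V(\ell\cdot)$ and $\ell^2W(\ell\cdot)$ from below by softer surrogates of the form $\ell^{-1}b(V)\,R_2^{-3}U(R_2^{-1}(x_i-x_j))$ and $\ell^{-4}b_\mathcal{M}(W)\,R_3^{-6}\widetilde U(R_3^{-1}(x_i-x_j,x_i-x_k))$ with $\int U = \int\widetilde U = 1$, generalising \cite{ground_state_energy_low_density_bose_gas} and \cite{condensation_trapped_dilute_bose_gas_three_body_interactions}. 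The key observation is that the regimes "a single pair is much closer than any triple diameter" and "three particles are tight without a dominating pair" are essentially mutually exclusive, so a suitable configuration-based partition of $\Lambda^n$ allows one to apply either the two-body or three-body Dyson lemma locally, without double-counting the sacrificed kinetic energy.

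My second step is to apply Temple's inequality to the renormalised operator on the trial state $1^{\otimes n}$. This picks up the leading expectation
\begin{equation*}
\tfrac{1}{2}n(n-1)\ell^{-1}b(V) + \tfrac{1}{6}n(n-1)(n-2)\ell^{-4}b_\mathcal{M}(W)
\end{equation*}
up to boundary corrections of order $R_2 + R_3$ from the soft potentials touching $\partial\Lambda$. The Temple correction is a second-moment quantity divided by the effective gap, controlled by combinations of $n^2\ell^{-1}b(V)$, $n^3\ell^{-4}b_\mathcal{M}(W)$ and their cross terms, all divided by $\varepsilon$. Finally, I would optimise: $R_2$ and $R_3$ must dominate the rescaled interaction range $\ell^{-1}R_0$ yet keep the boundary correction subleading, while $\varepsilon$ must be small enough to preserve the leading order and large enough for Temple's gap to dominate. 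Taking each as a small power of $Y$, the Dyson loss, the Temple remainder and the boundary terms all collapse into a single error $\mathcal{O}(\rho^2\mathfrak{a}\ell^5 Y^\nu)$ for some $\nu > 0$.

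The main obstacle is the combined Dyson lemma itself: one must design the configuration partition so that the two local kinetic sacrifices can coexist inside the same global $\varepsilon$ budget while still producing the sharp prefactors $b(V)/2$ and $b_\mathcal{M}(W)/6$. The narrow window $\alpha \in (1/3, 12/35)$ reflects the competition between the two Dyson length scales $R_2, R_3$ (which must exceed $\ell^{-1}R_0$) and Temple's requirement (which limits how small they can be relative to $n^{-1}$).
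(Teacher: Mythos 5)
Your plan follows essentially the same route as the paper's proof of this proposition: a combined Dyson renormalisation replacing $\ell^2V(\ell\cdot)$ and $\ell^2W(\ell\cdot)$ by soft surrogates $b(V)\ell^{-1}U_R$ and $b_\mathcal{M}(W)\ell^{-4}\widetilde{U}_R$ at the cost of part of the kinetic energy, followed by Temple's inequality on the constant state $1^{\otimes n}$, with the window on $\alpha$ (and on the Dyson radius exponent) enforced by balancing Dyson loss, boundary cutoff and Temple remainder. The convexity/superadditivity reduction to $n\sim\rho\ell^3$ you mention actually belongs to the subsequent box-division step, not to this proposition, where $n\leq 10\rho\ell^3$ is simply a hypothesis; this is a minor misattribution.

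The genuine gap is the combined many-body Dyson lemma, which you correctly flag as the main obstacle but do not construct, and the configuration dichotomy you sketch is not the one that makes it work. Your phrasing --- "a single pair much closer than any triple diameter" versus "three particles tight without a dominating pair" --- invokes a hierarchy of scales within a triple; these regimes are not mutually exclusive (a tight triple can easily contain a dominating pair), and it is not clear how the local Dyson lemmas would be invoked in each. The paper's mechanism is simpler and uses a single scale $R$: for each particle $i$ one sets
\begin{equation*}
F_{ij}=\chi_R(x_i-x_j)\prod_{m\neq i,j}\theta_{2R}\Bigl(\tfrac{x_i+x_j}{2}-x_m\Bigr), \qquad \widetilde{F}_{ijk}=\chi_R(x_i-x_j)\chi_R(x_i-x_k)\prod_{m\neq i,j,k}\theta_{2R}\Bigl(\tfrac{x_i+x_j+x_k}{3}-x_m\Bigr),
\end{equation*}
and checks that $F_{ij}\neq 0$ forces $x_j$ to be the unique neighbour of $x_i$ within distance $R$, while $\widetilde{F}_{ijk}\neq 0$ forces $x_j,x_k$ to be exactly two such neighbours; these events are mutually exclusive by counting, giving the pointwise bound $\sum_{j\neq i}F_{ij}+\tfrac12\sum_{j\neq k\neq i}\widetilde{F}_{ijk}\leq 1$. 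Sandwiching by $p_i\cdot p_i$ and summing over $i$ then distributes the kinetic energy to pairs and triples without double-counting, after which \cref{lemma:dyson_lemma_radial_potentials,lemma:dyson_lemma_three_body_symmetry} are applied in relative coordinates (dropping the centre-of-mass kinetic term). Without an inequality of this type, the $\varepsilon$-budget you allocate to Temple's inequality cannot be justified, so the proposal as written does not yet establish the proposition.
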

	
	To prove Proposition~\ref{prop:energy_short_length} we will replace the singular potentials $\ell^2V(\ell\cdot)$ and $\ell^2W(\ell\cdot)$ by softer ones: $N^{-1}R^{-3}U(R^{-1}\cdot)$ and $N^{-2}R^{-6}\tilde{U}(R^{-1}\cdot)$ respectively. We do so using the Dyson lemma \cite[Lemma 1]{Dyson1957gseHS} (see also \cite[Lemma 1]{Lieb1998GSE}) and a three-body adaptation of it \cite[Lemma 3]{Nam2022ground}. Implementing this at the many-body level is however not trivial and requires proving a suitable Dyson lemma for joint two-body and three-body interactions. We will conclude the proof using Temple's inequality \cite{temple_inequality}.
	
	\subsection{Dyson Lemmas}
	
	\label{subsec:dyson_lemma}
	
	\begin{lemma}[Dyson Lemma for radial potentials]
		\label{lemma:dyson_lemma_radial_potentials}
		Let $V$ be a two-body potential satisfying Assumption~\ref{assumption:potentials}. Let $R_1,R_2 > 0$ such that $R_0 < R_1 < R_2$ and $\{|x| \leq R_2\}\subset\Lambda$. Then, for any nonnegative radial function $U\in C(\Lambda)$ with $\int U = 1$ and $\Supp U\subset\left\{R_1\leq|x|\leq R_2\right\}$, we have the operator inequality
		\begin{equation*}
			-2\nabla_x\mathds{1}_{\left\{|x|\leq R_2\right\}}\nabla_x + V(x) \geq 4\pi a(V)U(x)
		\end{equation*}
		on $L^2(\Lambda)$.
	\end{lemma}
	
	\begin{proof}
		See \cite[Lemma 1]{Lieb1998GSE}.
	\end{proof}
	
	\begin{lemma}[Dyson lemma for three-body potentials]
		\label{lemma:dyson_lemma_three_body_symmetry}
		Let $W$ be a three-body potential satisfying Assumption~\ref{assumption:potentials}. Let $R_1,R_2 > 0$ such that $R_0 < R_1 < R_2$ and $\{|\mathbf{x}|\leq R_2\}\subset\Lambda^2$. Then, for any nonnegative function $\tilde{U}\in C(\Lambda^2)$ that satisfies the three-body symmetry \eqref{eq:three_body_symmetry} with $\Supp \tilde{U} \subset \{R_1\leq|\mathbf{x}|\leq R_2\}$ and $\int\tilde{U} = 1$, we have the operator inequality
		\begin{equation*}
			-2\nabla_{\mathbf{x}}\mathds{1}_{\left\{|\mathbf{x}|\leq R_2\right\}}\nabla_{\mathbf{x}} + W(\mathbf{x}) \geq b_\mathcal{M}(W)\left(1 - \dfrac{CR_0}{R_1}\right)\tilde{U}(\mathbf{x})
		\end{equation*}
		on $L^2_\s(\Lambda^2)$, for some universal constant $C > 0$.
	\end{lemma}
	
	\begin{proof}
		See \cite[Lemma 3]{Nam2022ground}.
	\end{proof}
	
	To prove Proposition~\ref{prop:energy_short_length}, we wish to implement the previous two lemmas at the many-body level. This is done in the next lemma and is one of the main novelties of this work. The key idea that allows us to combine Lemmas~\ref{lemma:dyson_lemma_radial_potentials}~and~\ref{lemma:dyson_lemma_three_body_symmetry} is to separate two spatial configurations and discard the rest. Namely, a given particle either has one close neighbour, two close neighbours or strictly more than two, which is the configuration we neglect (the nonnegativity of the potentials is important here).
	
	We also need to deal with the boundary of the domain. As it is, a particle close to the boundary of $\Lambda$ might interact with particles outside the box. To fix this, we define
	\begin{equation}
		\label{eq:def_box_with_margin}
		\Lambda_\eta = (1-\eta)\Lambda,
	\end{equation}
	for $\eta>0$, where we recall that $\Lambda = [-1/2,1/2]^3$.
	\begin{lemma}[Many-body Dyson lemma with two-body and three-body interactions]
		\label{lemma:dyson_many_body}
		Let $V$ and $W$ satisfy Assumption~\ref{assumption:potentials}. Let $U$ and $\tilde{U}$ be any two functions defined as in Lemmas~\ref{lemma:dyson_lemma_radial_potentials}~and~\ref{lemma:dyson_lemma_three_body_symmetry} respectively, with $\Supp U\subset\{1/4 \leq |x|\leq 1/2\}$ and $\Supp \tilde{U}\subset\{1/4 \leq |\mathbf{x}|\leq 1/2\}$. Denote
		\begin{equation*}
			U_R = R^{-3}U(R^{-1}\cdot), \quad \tilde{U}_R = R^{-6}\tilde{U}(R^{-1}\cdot) \quad \textmd{and} \quad \theta_{2R}(x) = \mathds{1}_{\{\vert x\vert > 2R\}},
		\end{equation*}
		for $R > 0$. Then, for all $\ell,R,\eta>0$ such that $1 > \eta > R/2$ and $R/4 > R_0/\ell$, we have the operator inequality
		\begin{multline*}
			\displaystyle\sum_{i=1}^n-\Delta_{x_i} + \dfrac{1}{2}\sum_{\substack{1\leq i,j\leq n\\i\neq j}}\ell^2V(\ell(x_i-x_j)) + \dfrac{1}{6}\sum_{\substack{1\leq i,j,k\leq n\\i\neq j\neq k\neq i}}\ell^2W(\ell(x_i - x_j, x_i - x_k))\\
			\begin{aligned}
				&\geq \displaystyle\displaystyle \dfrac{4\pi a(V)}{\ell}\sum_{\substack{1\leq i,j\leq n\\i\neq j}}U_R(x_i-x_j)\mathds{1}_{\Lambda_\eta}(x_i)\prod_{\substack{1\leq l\leq n\\ m\neq i,j}}\theta_{2R}\left(\dfrac{x_i+x_j}{2} - x_m\right)\\
				&\phantom{\geq} + \displaystyle \dfrac{b_\mathcal{M}(W)}{6\ell^4}\left(1-C\dfrac{R_0}{\ell R}\right)\sum_{\substack{1\leq i,j,k\leq n\\i\neq j\neq k\neq i}}
				\begin{multlined}[t]
					\tilde{U}_R(x_i-x_j,x_i-x_k)\mathds{1}_{\Lambda_\eta}(x_i)\\
					\times\prod_{\substack{1\leq m\leq n\\ m\neq i,j,k}}\theta_{2R}\left(\dfrac{x_i+x_j+x_k}{3} - x_m\right)
				\end{multlined}
			\end{aligned}
		\end{multline*}
		on $L^2_\s(\Lambda^n)$. Here $C>0$ is a universal constant (independent of $V,W,U,\tilde{U},R,\ell,n$ and $\eta$) and we recall that $\mathfrak{a} = \max(a(V),\rho b_\mathcal{M}(W))$.
	\end{lemma}
	
	\begin{proof}
		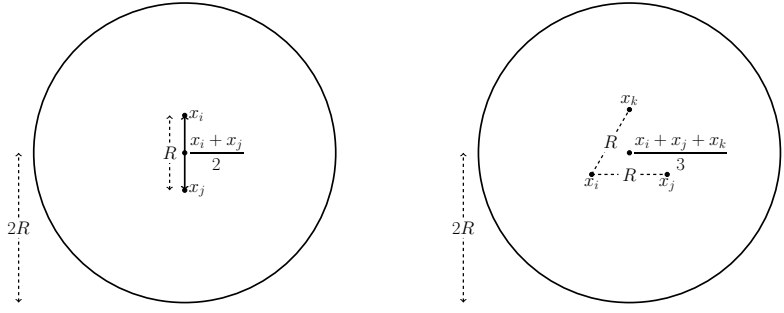
\begin{figure}[ht]
			\centering
			\begin{subfigure}{.4\textwidth}
				\centering
				\resizebox{0.8\textwidth}{!}{%
					\begin{tikzpicture}
	\draw[color=black, ultra thick](0,0) circle (5);
	\filldraw[black] (0,1.25) circle (2pt) node[anchor=west]{\LARGE $x_i$};
	\filldraw[black] (0,-1.25) circle (2pt) node[anchor=west]{\LARGE $x_j$};
	\filldraw[black] (0,0) circle (2pt) node[anchor=west]{\LARGE $\dfrac{x_i + x_j}{2}$};
	\draw[color=black, ultra thick, <->](0,-1.25) -- (0,1.25);
	\node(a) at (-0.5,0) {\LARGE $R$};
	\draw[color=black, very thick, dashed,<-](-0.5,-1.25) -- (a);
	\draw[color=black, very thick, dashed,->](a) -- (-0.5,1.25);
	\node(b) at (-5.5,-2.5) {\LARGE $2R$};
	\draw[color=black, very thick, dashed,<-](-5.5,-5) -- (b);
	\draw[color=black, very thick, dashed,->](b) -- (-5.5,0);
\end{tikzpicture}
				}
				\caption{Configuration for which the l.h.s of \eqref{eq:no_three_body_collision_bound} is nonzero. The particle $x_j$ must be a a distance less than $R$ from $x_i$ and all the others must lie outside the black circle.}
				\label{fig:sfig1}
			\end{subfigure}\hskip6ex%
			\begin{subfigure}{.4\textwidth}
				\centering
				\resizebox{0.8\textwidth}{!}{%
					\begin{tikzpicture}
	\draw[color=black, ultra thick](0,0) circle (5);
	\filldraw[black] (0,{2.5*sqrt(3)/3}) circle (2pt) node[anchor=south]{\LARGE $x_k$};
	\filldraw[black] (-1.25,{-2.5*sqrt(3)/6}) circle (2pt) node[anchor=north]{\LARGE $x_i$};
	\filldraw[black] (1.25,{-2.5*sqrt(3)/6}) circle (2pt) node[anchor=north]{\LARGE $x_j$};
	\filldraw[black] (0,0) circle (2pt) node[anchor=west]{\LARGE $\dfrac{x_i + x_j + x_k}{3}$};
	
	\node(a) at (0,{-2.5*sqrt(3)/6}) {\LARGE $R$};
	\draw[color=black, very thick, dashed](-1.25,{-2.5*sqrt(3)/6}) -- (a);
	\draw[color=black, very thick, dashed](a) -- (1.25,{-2.5*sqrt(3)/6});
	
	\node(c) at ({-2.5/4},{2.5*sqrt(3)/12}) {\LARGE $R$};
	\draw[color=black, very thick, dashed](-1.25,{-2.5*sqrt(3)/6}) -- (c);
	\draw[color=black, very thick, dashed](c) -- (0,{2.5*sqrt(3)/3});	
	

		
	\node(b) at (-5.5,-2.5) {\LARGE $2R$};
	\draw[color=black, very thick, dashed,<-](-5.5,-5) -- (b);
	\draw[color=black, very thick, dashed,->](b) -- (-5.5,0);
\end{tikzpicture}
				}
				\caption{Configuration for which the l.h.s of \eqref{eq:no_four_body_collision_bound} is nonzero. The particles $x_j$ and $x_k$ must be a a distance less than $R$ from $x_i$ and all the others must lie outside the black circle.}
				\label{fig:sfig2}
			\end{subfigure}
			\caption{Illustration of the two possible configurations for which the l.h.s of \eqref{eq:no_three_and_four_body_collision_bound} is nonzero. Both configurations are incompatible.}
			\label{fig:fig1}
		\end{figure}
		
		First, we denote $\chi_R(x)=\mathds{1}_{\left\{\vert x\vert\leq R\right\}}$ and define
		\begin{equation*}
			\label{eq:cut_off_two_body}
			F_{ij} = \chi_R(x_i-x_j)\prod_{\substack{1\leq m\leq n\\ m\neq i,j}}\theta_{2R}\left(\dfrac{x_i+x_j}{2} - x_m\right),
		\end{equation*}
		for all $\left(x_1,\cdots,x_n\right)\in\left(\mathbb{R}^3\right)^n$.
		This cut-off is such that for every $i\in\{1,\dots, n\}$ there could be at most one $j\neq i$ such that $F_{ij} \neq 0$. To phrase it differently, there exists at most one particle $x_j$ located at a distance smaller than $R$ from $x_i$ such that all the other particles $x_m$ are located at a distance greater than $2R$ from the barycentre of $x_i$ and $x_j$, and thus at a distance greater than $R$ from $x_i$. Hence, we have the "no three-body collision" bound
		\begin{equation}
			\label{eq:no_three_body_collision_bound}
			\sum_{\substack{1\leq j\leq n\\j\neq i}}F_{ij} \leq 1
		\end{equation}
		(see Figure~\ref{fig:sfig1}). Similarly, we define
		\begin{equation*}
			\label{eq:cut_off_three_body}
			\widetilde{F}_{ijk} = \chi_R(x_i - x_j)\chi_R(x_i - x_k)\prod_{\substack{1\leq m\leq n\\ m\neq i,j,k}}\theta_{2R}\left(\dfrac{x_i+x_j+x_k}{3} - x_m\right).
		\end{equation*}
		This cut-off is such that for every $i\in\{1,\dots, n\}$  there could be at most one pair $(j, k)$ with $i\neq j\neq k\neq i$ such that $\widetilde{F}_{ijk} = \widetilde{F}_{ikj} \neq 0$. Again, this means that there exists at most one pair of particles $x_j$ and $x_k$ located at a distance smaller than $R$ from $x_i$ such that all the other particles $x_m$ are located at a distance greater than $2R$ from the barycentre of $x_i$, $x_j$ and $x_k$, and thus at a distance greater than $R$ from $x_i$. Thus we have the "no four-body collision" bound
		\begin{equation}
			\label{eq:no_four_body_collision_bound}
			\sum_{\substack{1\leq j,k\leq n\\i\neq j\neq k\neq i}}\widetilde{F}_{ijk}\leq 2
		\end{equation}
		(see Figure~\ref{fig:sfig2}).	Furthermore, it turns out that, for every $i\in\{1,\dots, n\}$, the left-hand sides of \eqref{eq:no_three_body_collision_bound} and \eqref{eq:no_four_body_collision_bound} cannot both be nonzero at the same time. Indeed, if the left-hand side of \eqref{eq:no_three_body_collision_bound} is nonzero for a given particle $i$, then there exists only one particle located at a distance less than $R$ from the particle $i$ and all the others are located at a distance great than $R$ from $x_i$. Thus, there cannot exist two particles $j$ and $k$ located at a distance less than $R$ from $x_i$, which means that the left-hand side of \eqref{eq:no_four_body_collision_bound} is null. Conversely, if the left-hand side of \eqref{eq:no_four_body_collision_bound} is nonzero for a given particle $i$, then there exist two particles located at a distance less than $R$ from the particle $i$, and the left-hand side of \eqref{eq:no_three_body_collision_bound} must therefore be zero. Hence, the inequalities \eqref{eq:no_three_body_collision_bound} and \eqref{eq:no_four_body_collision_bound} can simply be combined into the inequality
		\begin{equation}
			\label{eq:no_three_and_four_body_collision_bound}
			\sum_{\substack{1\leq j\leq n\\j\neq i}}F_{ij} +
			\dfrac{1}{2}\sum_{\substack{1\leq j,k\leq n\\i\neq j\neq k\neq i}}\widetilde{F}_{ijk} \leq 1,
		\end{equation}
		for every $i\in\{1,\dots, n\}$ (see Figure~\ref{fig:fig1}).	Multiplying the above inequality to the left and to the right by $\mathbf{p}_i \coloneqq -\mathbf{i}\nabla_{x_i}$, then summing over $i$, we obtain
		\begin{align*}
			\sum_{i=1}^n-\Delta_i &\geq \sum_{\substack{1\leq i,j\leq n\\i\neq j}}\mathbf{p}_iF_{ij}\mathbf{p}_i + \dfrac{1}{2}\sum_{\substack{1\leq i,j,k\leq n\\i\neq j\neq k\neq i}}\mathbf{p}_i\widetilde{F}_{ijk}\mathbf{p}_i\\
			&\geq \dfrac{1}{2}\sum_{\substack{1\leq i,j\leq n\\i\neq j}}\sum_{q\in\{i,j\}}\mathbf{p}_qF_{ij}\mathbf{p}_q + \dfrac{1}{6}\sum_{\substack{1\leq i,j,k\leq n\\i\neq j\neq k\neq i}}\sum_{q\in\{i,j,k\}}\mathbf{p}_q\widetilde{F}_{ijk}\mathbf{p}_q.
		\end{align*}
		Hence, we can write
		\begin{multline}
			\label{eq:dyson_many_body_proof_separated_interactions}
			\sum_{i=1}^n-\Delta_i + \dfrac{1}{2}\sum_{\substack{1\leq i,j\leq n\\ j\neq i}}\ell^2V(\ell(x_i-x_j)) + \dfrac{1}{6}\sum_{\substack{1\leq i,j,k\leq n\\i\neq j\neq k\neq i}}\ell^2W(\ell(x_i - x_j,x_i-x_k))\\
			\begin{aligned}[t]
				&\geq \dfrac{1}{2}\sum_{\substack{1\leq i,j\leq n\\i\neq j}}\left[\sum_{q\in\{i,j\}}\mathbf{p}_qF_{ij}\mathbf{p}_q + \ell^2V(\ell(x_i-x_j))\right]\\
				&\phantom{\geq} + \dfrac{1}{6}\sum_{\substack{1\leq i,j,k\leq n\\i\neq j\neq k\neq i}}\left[\sum_{q\in\{i,j,k\}}\mathbf{p}_q\widetilde{F}_{ijk}\mathbf{p}_q + \ell^2W(\ell(x_i-x_j,x_i-x_k))\right].
			\end{aligned}
		\end{multline}
		We now apply Lemma~\ref{lemma:dyson_lemma_radial_potentials} to the first term in the right-hand side of the previous inequality and Lemma~\ref{lemma:dyson_lemma_three_body_symmetry} to the second one. For two-body interactions, we consider the following change of variables
		\begin{equation*}
			c_{ij}=\dfrac{1}{2}\left(x_i + x_j\right),\quad \text{and} \quad r_{ij} = x_i - x_j.
		\end{equation*}
		Then, denoting $\mathbf{p}_x = -\mathbf{i}\nabla_x$, we have
		\begin{equation*}
			\mathbf{p}_i = \dfrac{1}{2}\mathbf{p}_{c_{ij}} + \mathbf{p}_{r_{ij}} \quad \text{and} \quad \mathbf{p}_j = \dfrac{1}{2}\mathbf{p}_{c_{ij}} - \mathbf{p}_{r_{ij}},
		\end{equation*}
		and thus
		\begin{equation*}
			\sum_{q\in\{i,j\}}\mathbf{p}_qF_{ij}\mathbf{p}_q = \dfrac{1}{2}\mathbf{p}_{c_{ij}}F_{ij}\mathbf{p}_{c_{ij}} + 2 \mathbf{p}_{r_{ij}}F_{ij}\mathbf{p}_{r_{ij}}.
		\end{equation*}
		We can ignore the kinetic energy of the centre of mass $\mathbf{p}_{c_{ij}}F_{ij}\mathbf{p}_{c_{ij}}$ for a lower bound since it is nonnegative, which leaves us with
		\begin{equation*}
			\sum_{q\in\{i,j\}}\mathbf{p}_qF_{ij}\mathbf{p}_q \geq  2\mathbf{p}_{r_{ij}}\mathds{1}_{\left\{\vert r_{ij}\vert\leq R\right\}}\mathbf{p}_{r_{ij}}\prod_{\substack{1\leq m\leq n\\ m\neq i,j}}\theta_{2R}(c_{ij} - x_m).
		\end{equation*}
		Moreover, since $V\geq0$, we trivially have the bound
		\begin{equation*}
			V(\ell(x_i - x_j)) \geq V(\ell r_{ij})\mathds{1}_{\Lambda_\eta}(c_{ij})\prod_{\substack{1\leq m\leq n\\ m\neq i,j}}\theta_{2R}(c_{ij} - x_m),
		\end{equation*}
		where we recall that $\Lambda_\eta$ was introduced in \eqref{eq:def_box_with_margin} to neglect the interactions with particles outside the box $\Lambda$. Finally, since $\Supp U_R\subset\left\{R/4\leq|x|\leq R/2\right\} \subset\left\{R_0/\ell\leq|x|\leq\eta\right\}$, we can apply Lemma~\ref{lemma:dyson_lemma_radial_potentials} to obtain
		\begin{multline*}
			\sum_{q\in\{i,j\}}\mathbf{p}_qF_{ij}\mathbf{p}_q + \ell^2V(\ell(x_i - x_j))\\
			\begin{aligned}[b]
				&\geq \left(2\mathbf{p}_{r_{ij}}\mathds{1}_{\left\{|r_{ij}|\leq R\right\}}\mathbf{p}_{r_{ij}} + \ell^2V(\ell r_{ij})\right)\mathds{1}_{\Lambda_\eta}(c_{ij})\prod_{\substack{1\leq m\leq n\\ m\neq i,j}}\theta_{2R}(c_{ij} - x_m)\\
				&\geq \dfrac{4\pi a(V)}{\ell}U_R(r_{ij})\mathds{1}_{\Lambda_\eta}(c_{ij})\prod_{\substack{1\leq m\leq n\\ m\neq i,j}}\theta_{2R}(c_{ij} - x_m).
			\end{aligned} \numberthis \label{eq:dyson_many_body_proof_first_dyson}
		\end{multline*}
		
		For three-body interactions, we do a change of variables similar to \eqref{eq:centre_of_mass_removal}:
		\begin{equation*}
			c_{ijk} = \dfrac{1}{3}\left(x_i + x_j + x_k\right), \quad r_{ij} = x_i - x_j, \quad \text{and} \quad r_{ik} = x_i - x_k.
		\end{equation*}
		Proceeding as before and introducing $\mathbf{r}_{ijk} = (r_{ij},r_{ik})$, we obtain
		\begin{equation*}
			\sum_{q\in\{i,j,k\}}\mathbf{p}_q\widetilde{F}_{ijk}\mathbf{p}_q \geq 2\mathcal{M}\mathbf{p}_{\mathbf{r}_{ijk}}\widetilde{F}_{ijk}\mathcal{M}\mathbf{p}_{\mathbf{r}_{ijk}},
		\end{equation*}
		where we recall that the matrix $\mathcal{M}$ is given by \eqref{eq:three_body_symmetry_matrix}. Using the bound
		\begin{equation*}
			\widetilde{F}_{ijk} = \chi_R(r_j)\chi_R(r_k)\prod_{\substack{1\leq m\leq n\\ m\neq i,j,k}}\theta_{2R}(r_i-x_m) \geq \mathds{1}_{\left\{|\mathbf{r}_{jk}|\leq R/2\right\}}\prod_{\substack{1\leq m\leq n\\ m\neq i,j,k}}\theta_{2R}(r_i-x_m),
		\end{equation*}
		we thus get
		\begin{equation*}
			\sum_{q\in\{i,j,k\}}\mathbf{p}_q\widetilde{F}_{ijk}\mathbf{p}_q \geq 2\mathcal{M}\mathbf{p}_{\mathbf{r}_{ijk}}\mathds{1}_{\left\{|\mathbf{r}_{ijk}|\leq R/2\right\}}\mathcal{M}\mathbf{p}_{\mathbf{r}_{ijk}}\prod_{\substack{1\leq m\leq n\\ m\neq i,j,k}}\theta_{2R}(c_{ijk} - x_m).
		\end{equation*}
		Moreover, from $W\geq 0$ we have the obvious bound
		\begin{equation*}
			W(\ell(x_i-x_j,x_i-x_k)) \geq W(\ell\mathbf{r}_{ijk})\mathds{1}_{\Lambda_\eta}(c_{ijk})\prod_{\substack{1\leq m\leq n\\ m\neq i,j,k}}\theta_{2R}(c_{ijk} - x_m).
		\end{equation*}
		Since $\Supp \tilde{U}_R\subset\{R/4\leq |\mathbf{x}|\leq R/2\}\subset\{R_0/\ell\leq |\mathbf{x}|\leq \eta\}$, we can therefore apply Lemma~\ref{lemma:dyson_lemma_three_body_symmetry} to obtain
		\begin{multline*}
			\sum_{q\in\{i,j,k\}}\mathbf{p}_q\widetilde{F}_{ijk}\mathbf{p}_q + \ell^2W(\ell(x_i - x_j,x_i-x_k))\\
			\begin{aligned}[b]
				&\geq \left[2\mathcal{M}\mathbf{p}_{\mathbf{r}_{ijk}}\mathds{1}_{\left\{|\mathbf{r}_{ijk}|\leq R/2\right\}}\mathcal{M}\mathbf{p}_{\mathbf{r}_{ijk}} + \ell^2W(\ell\mathbf{r}_{ijk})\right]\mathds{1}_{\Lambda_\eta}(c_{ijk})\prod_{\substack{1\leq m\leq n\\ m\neq i,j,k}}\theta_{2R}(c_{ijk} - x_m)\\
				&\geq \left(1-C\dfrac{R_0}{\ell R}\right)\dfrac{b_\mathcal{M}(W)}{\ell^4}U_R(\mathbf{r}_{ijk})\mathds{1}_{\Lambda_\eta}(c_{ijk})\prod_{\substack{1\leq m\leq n\\ m\neq i,j,k}}\theta_{2R}(c_{ijk} - x_m).
			\end{aligned}\numberthis \label{eq:dyson_many_body_proof_second_dyson}
		\end{multline*}
		
		Injecting both \eqref{eq:dyson_many_body_proof_first_dyson} and \eqref{eq:dyson_many_body_proof_second_dyson} into \eqref{eq:dyson_many_body_proof_separated_interactions} concludes the proof of Lemma~\ref{lemma:dyson_many_body}.
	\end{proof}
	
	\subsection{Proof of Proposition~\ref{prop:energy_short_length}: Energy at short length scales}
	
	As the proof of Proposition~\ref{prop:energy_short_length} relies heavily on Temple's inequality \cite{temple_inequality}, we recall it for clarity (see e.g. \cite[Theorem~XIII.5]{methods_modern_mathematical_physics_4}).
	
	\begin{lemma}[Temple's inequality]
		\label{th:temple_inequality}
		Let $A$ be a bounded from below self-adjoint operator with compact resolvent. Let $\lambda_0(A)$ and $\lambda_1(A)$ denote its first two eigenvalues. Then, for any normalised state $\Psi\in D(A)$ and $\gamma < \lambda_1(A)$ such that $\left<A\right>_\Psi\coloneqq \left<\Psi,A\Psi\right> < \gamma$, we have
		\begin{equation}
			\label{eq:temple_inequality}
			\lambda_0(A) \geq \left<A\right>_\Psi - \dfrac{\left<A^2\right>_\Psi - \left<A\right>_\Psi^2}{\gamma - \left<A\right>_\Psi}.
		\end{equation}
	\end{lemma}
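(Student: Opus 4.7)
The plan is to establish Temple's inequality through the classical route of exhibiting a suitable nonnegative operator and then taking its expectation in the trial state. The key observation is that the operator $(A-\gamma)(A-\lambda_0)$ is nonnegative under the hypotheses of the lemma, which yields a quadratic inequality in the moments $\langle A\rangle_\Psi$ and $\langle A^2\rangle_\Psi$ that can be rearranged into the claimed bound.

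First, I would use that $A$ is self-adjoint with compact resolvent to diagonalise it via the spectral theorem as $A=\sum_n\lambda_n P_n$, where $\lambda_0\leq\lambda_1\leq\cdots$ denote the eigenvalues and $P_n$ the spectral projections on the corresponding eigenspaces. For each eigenvalue $\lambda_n$ I would check that
\begin{equation*}
(\lambda_n-\gamma)(\lambda_n-\lambda_0)\geq 0,
\end{equation*}
distinguishing $n=0$ (where the product vanishes) and $n\geq 1$ (where $\lambda_n\geq\lambda_1>\gamma$ by hypothesis and $\lambda_n\geq\lambda_0$ by ordering, so both factors are nonnegative). By functional calculus this upgrades to the operator inequality $(A-\gamma)(A-\lambda_0)\geq 0$ on $D(A^2)$. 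A small caveat is that one only assumes $\Psi\in D(A)$, not $D(A^2)$, so I would justify the spectral expansion $\langle\Psi,(A-\gamma)(A-\lambda_0)\Psi\rangle=\sum_n(\lambda_n-\gamma)(\lambda_n-\lambda_0)\|P_n\Psi\|^2\geq 0$ directly, which only requires $\Psi\in D(A)$ so that $\langle A^2\rangle_\Psi=\sum_n\lambda_n^2\|P_n\Psi\|^2$ makes sense (allowing $+\infty$, in which case the claimed bound is trivial; otherwise everything is finite).

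Next, I would expand the product to obtain
\begin{equation*}
\langle A^2\rangle_\Psi-(\gamma+\lambda_0)\langle A\rangle_\Psi+\gamma\lambda_0\geq 0,
\end{equation*}
i.e., $\lambda_0\bigl(\langle A\rangle_\Psi-\gamma\bigr)\leq \langle A\rangle_\Psi\gamma-\langle A^2\rangle_\Psi$. Since $\langle A\rangle_\Psi<\gamma$, dividing by the negative quantity $\langle A\rangle_\Psi-\gamma$ flips the inequality and yields
\begin{equation*}
\lambda_0\geq \frac{\gamma\langle A\rangle_\Psi-\langle A^2\rangle_\Psi}{\gamma-\langle A\rangle_\Psi}.
\end{equation*}
Finally, subtracting $\langle A\rangle_\Psi$ from both sides and simplifying the numerator produces $-(\langle A^2\rangle_\Psi-\langle A\rangle_\Psi^2)/(\gamma-\langle A\rangle_\Psi)$, which is exactly the form \eqref{eq:temple_inequality}.

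There is essentially no genuine obstacle here: the result is elementary once the correct operator $(A-\gamma)(A-\lambda_0)$ is introduced. The only point requiring a modicum of care is the domain issue sketched above, namely checking that $\langle(A-\gamma)(A-\lambda_0)\rangle_\Psi$ is well-defined and nonnegative for $\Psi\in D(A)$ via the spectral expansion, and that the sign of $\langle A\rangle_\Psi-\gamma$ allows the final division.
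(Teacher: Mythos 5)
Your argument is correct: the positivity of $\langle(A-\gamma)(A-\lambda_0)\rangle_\Psi$, verified eigenvalue by eigenvalue via the spectral decomposition, followed by the division by $\gamma-\langle A\rangle_\Psi>0$, is exactly the standard proof of Temple's inequality, and the final algebraic rewriting into the form \eqref{eq:temple_inequality} checks out. The paper itself does not prove this lemma but only cites Temple's original article and \cite[Theorem~XIII.5]{methods_modern_mathematical_physics_4}, whose proof is the one you reproduce. Your only caveat (the meaning of $\langle A^2\rangle_\Psi$ for $\Psi\in D(A)$) is in fact automatic, since $\langle A^2\rangle_\Psi=\|A\Psi\|^2=\sum_n\lambda_n^2\|P_n\Psi\|^2<\infty$ for $\Psi\in D(A)$ by self-adjointness, so no extra case distinction is needed.
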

	
	Recall that $\mathfrak{a} = \max(b(V),\rho b_\mathcal{M}(W))$ and $Y = \rho\mathfrak{a}^3$, and that we wish to prove that the rescaled Hamiltonian defined in \eqref{eq:hamiltonian_rescaled} is such that
	\begin{equation*}
		\displaystyle\tilde{H}_{n,\ell} \geq \dfrac{b(V)}{2\ell}n(n-1) + \dfrac{b_\mathcal{M}(W)}{6\ell^4}n(n-1)(n-2) + \mathcal{O}\left(\rho^2\mathfrak{a}\ell^5Y^\nu\right)
	\end{equation*}
	when $Y \rightarrow 0$.
	
	\begin{proof}[Proof of Proposition~\ref{prop:energy_short_length}]
		Let $U\in C(\Lambda)$ be nonnegative, radial, and such that $\int U = 1$ and $\Supp U \subset\{1/4\leq |x|\leq 1/2\}$. Let $\tilde{U}\in C(\Lambda^2)$ be nonnegative, satisfying the three-body symmetry \eqref{eq:three_body_symmetry}, and such that $\int \tilde{U}=1$ and $\Supp \tilde{U} \subset\{1/4\leq |\mathbf{x}|\leq 1/2\}$. Let us define $\ell$ and $R$ via
		\begin{equation*}
			\ell = \mathfrak{a}Y^{-\alpha} \quad \textmd{and} \quad R=Y^\beta,
		\end{equation*}
		for some parameters $\beta>\alpha>0$. Let $\eta \coloneqq 2R$ and define
		\begin{equation*}
			\mathcal{T} = \sum_{i=1}^n-\Delta_{x_i},
		\end{equation*}
		\begin{equation*}
			\mathcal{V}_{U} = \dfrac{4\pi a(V)}{\ell}\sum_{\substack{1\leq i,j\leq n\\i\neq j}}U_R(x_i-x_j)\mathds{1}_{\Lambda_\eta}(c_{ij})\prod_{\substack{1\leq m\leq n\\ m\neq i,j}}\theta_{2R}\left(c_{ij} - x_m\right)
		\end{equation*}
		and
		\begin{equation*}
			\mathcal{W}_{\tilde{U}} = \dfrac{b_\mathcal{M}(W)}{6\ell^4}\left(1-C\dfrac{R_0}{\ell R}\right)\sum_{\substack{1\leq i,j,k\leq n\\i\neq j\neq k\neq i}}\tilde{U}_R(x_i-x_j,x_i-x_k)\mathds{1}_{\Lambda_\eta}(c_{ijk})\prod_{\substack{1\leq m\leq n\\m\neq i,j,k}}\theta_{2R}\left(c_{ijk} - x_m\right),
		\end{equation*}
		where $c_{ij} \coloneqq (x_i + x_j)/2$ and $c_{ijk} \coloneqq (x_i + x_j + x_k)/3$. Thanks to the nonnegativity of the potentials $V$ and $W$, and Lemma~\ref{lemma:dyson_many_body} we have
		\begin{equation*}
			\tilde{H}_{n,\ell} \geq \varepsilon\mathcal{T} + (1 - \varepsilon)(\mathcal{V}_U + \mathcal{W}_{\tilde{U}}).
		\end{equation*}
		
		We now apply Temple's inequality to $A \coloneqq \varepsilon\mathcal{T} + (1 - \varepsilon)(\mathcal{V}_U + \mathcal{W}_{\tilde{U}})$, seeing $(1 - \varepsilon)(\mathcal{V}_U + \mathcal{W}_{\tilde{U}})$ as a perturbation of  $\varepsilon\mathcal{T}$. The ground state of $\varepsilon\cT$ is $\Psi_0 = 1$ and it is associated to the eigenvalue $\lambda_0(\varepsilon\cT) = 0$. The second eigenvalue of $\varepsilon\cT$ is $\lambda_1(\varepsilon\cT) = \varepsilon\pi^2/2$. Hence, applying Lemma~\ref{th:temple_inequality} to $A$ with $\gamma = \lambda_1(\varepsilon\mathcal{T})$, and using the nonnegativity of the perturbation $(1 - \varepsilon)(\mathcal{V}_U + \mathcal{W}_{\tilde{U}})$, we obtain
		\begin{equation}
			\tilde{H}_{n,\ell} \geq (1-\varepsilon)\left<\Psi_0,\left(\mathcal{V}_{U} + \mathcal{W}_{\tilde{U}}\right)\Psi_0\right> - 2(1-\varepsilon)^2\dfrac{\left<\Psi_0,\mathcal{V}_{U}^2\Psi_0\right> + \left<\Psi_0,\mathcal{W}_{\tilde{U}}^2\Psi_0\right>}{\varepsilon\pi^2/2 - (1-\varepsilon)\left<\Psi_0,\left(\mathcal{V}_{U} + \mathcal{W}_{\tilde{U}}\right)\Psi_0\right>}, \label{eq:temple_inequality_applied}
		\end{equation}
		as long as
		\begin{equation}
			\label{eq:temple_inequality_applied_condition}
			\lambda_1(\varepsilon\mathcal{T}) - \lambda_0(\varepsilon\mathcal{T}) > (1 - \varepsilon)\left<\Psi_0,(\mathcal{V}_U + \mathcal{W}_{\tilde{U}})\Psi_0\right>.
		\end{equation}
		Here we used the following direct consequence of the Cauchy-Schwartz inequality:
		\begin{equation*}
			\left(\mathcal{V}_{U} + \mathcal{W}_{\tilde{U}}\right)^2 \leq 2\left(\mathcal{V}_{U}^2 + \mathcal{W}_{\tilde{U}}^2\right).
		\end{equation*}
		
		We at once need to control $\left<\Psi_0,\left(\mathcal{V}_U + \mathcal{W}_{\tilde{U}}\right)\Psi_0\right>$, $\left<\Psi_0,\mathcal{V}_U^2\Psi_0\right>$ and $\left<\Psi_0,\mathcal{W}_{\tilde{U}}^2\Psi_0\right>$, and to obtain a precise lower bound on $(1-\varepsilon)\left<\Psi_0,\left(\mathcal{V}_{U} + \mathcal{W}_{\tilde{U}}\right)\Psi_0\right>$. A simple computation yields
		\begin{equation*}
			\left<\Psi_0,\left(\mathcal{V}_{U} + \mathcal{W}_{\tilde{U}}\right)\Psi_0\right> \leq C\left(\dfrac{a(V)}{\ell}n^2 + \dfrac{b_\mathcal{M}(W)}{\ell^4}n^3\right) \leq C Y^{2 - 5\alpha},
		\end{equation*}
		for $Y$ small enough. Thus, taking $\varepsilon \gtrsim Y^{2 - 5\alpha}$ is sufficient to ensure that \eqref{eq:temple_inequality_applied_condition} is valid. Moreover, the no three-body collision bound \eqref{eq:no_three_body_collision_bound} and $\Supp U_R\subset\{|x|\leq R\}$ yield the pointwise bound
		\begin{equation*}
			\label{eq:no_three_body_collision_bound_temple_inequality}
			\displaystyle\sum_{\substack{1\leq i,j\leq n\\i\neq j}}U_R(x_i-x_j)\mathds{1}_{\Lambda_\eta}(c_{ij})\prod_{\substack{1\leq m\leq n\\m\neq i,j}}\theta_{2R}\left(c_{ij} - x_m\right) \leq C\Vert U\Vert_{L^\infty} R^{-3}n,
		\end{equation*}
		from which we obtain
		\begin{equation*}
			\displaystyle\left<\Psi_0,\mathcal{V}_{U}^2\Psi_0\right> \leq \displaystyle C\left(nR^{-3}\dfrac{a(V)}{\ell}\right)^2 \leq  CY^{2-4\alpha-6\beta}.
		\end{equation*}
		Similarly, we have the pointwise bound
		\begin{equation*}
			\label{eq:no_four_body_collision_bound_temple_inequality}
			\displaystyle\sum_{\substack{1\leq i,j,k\leq n\\i\neq j\neq k\neq i}}\tilde{U}_R(x_i-x_j,x_i-x_k)\mathds{1}_{\Lambda_\eta}(c_{ijk})\prod_{\substack{1\leq m\leq n\\m\neq i,j,k}}\theta_{2R}\left(c_{ijk} - x_m\right) \leq C\Vert \tilde{U}\Vert_{L^\infty} R^{-6}n,
		\end{equation*}
		which leads to
		\begin{equation*}
			\displaystyle\left<\Psi_0,\mathcal{W}_{\tilde{U}}^2\Psi_0\right> \leq \displaystyle C\left(nR^{-6}\dfrac{b_\mathcal{M}(W)}{\ell^4}\right)^2 \leq CY^{2\alpha - 12\beta}.
		\end{equation*}
		Hence, by taking $\varepsilon$ such that $\varepsilon\pi^2/4 \gtrsim (1 - \varepsilon)\left<\Psi_0,(\mathcal{V}_U + \mathcal{W}_{\tilde{U}})\Psi_0)\right>$, which is achieved by $\varepsilon \geq CY^{2 -5\alpha}$ for some $C$ large enough, the last term of \eqref{eq:temple_inequality_applied} is estimated by
		\begin{equation}
			\label{eq:energy_short_length_proof1}
			\dfrac{2\left(1-\varepsilon\right)^2\left<\Psi_0\left(\mathcal{V}_U^2 + \mathcal{W}_{\tilde{U}}^2\right)\Psi_0\right>}{\dfrac{\varepsilon\pi^2}{2} - (1-\varepsilon)\left<\Psi_0,\left(\mathcal{V}_{U} + \mathcal{W}_{\tilde{U}}\right)\Psi_0\right>} \leq C\varepsilon^{-1}Y^{2 - 5\alpha}\left(Y^{\alpha-6\beta} + Y^{7\alpha - 2 -12\beta}\right).
		\end{equation}
		We now estimate the second term in \eqref{eq:temple_inequality_applied} from below. Using that
		\begin{equation*}
			\prod_{\substack{1\leq m\leq n\\m\neq i,j}}\theta_{2R}\left(c_{ij} - x_m\right) \geq 1 - \sum_{\substack{1\leq m\leq n\\m\neq i,j}}\left(1-\theta_{2R}\left(c_{ij} - x_m\right)\right),
		\end{equation*}
		we obtain
		\begin{align*}
			\left<\Psi_0,\mathcal{V}_{U}\Psi_0\right> &\geq \dfrac{4\pi a(V)}{\ell}n(n-1)(1-CnR^3)(1-\eta)^3\\
			&\geq \dfrac{4\pi a(V)}{\ell}n(n-1)\left(1-C\left(Y^{1-3(\alpha-\beta)} + Y^{\beta}\right)\right). \numberthis \label{eq:energy_short_length_proof2}
		\end{align*}
		Similarly, using
		\begin{equation*}
			\prod_{\substack{1\leq m\leq n\\m\neq i,j,k}}\theta_{2R}\left(c_{ijk} - x_m\right) \geq 1 - \sum_{\substack{1\leq m\leq n\\m\neq i,j,k}}\left(1 - \theta_{2R}\left(c_{ijk} - x_m\right)\right),
		\end{equation*}
		we obtain
		\begin{align*}
			\left<\Psi_0,\mathcal{W}_{\tilde{U}}\Psi_0\right> &\geq \dfrac{b_{\mathcal{M}}(W)}{6\ell^4}n(n-1)(n-2)\left(1-CnR^3\right)(1-\eta)^3\left(1-\dfrac{CR_0}{\ell R}\right)\\
			&\geq \dfrac{b_\mathcal{M}(W)}{6\ell^4}n(n-1)(n-2)\left(1 - C\left(Y^{1-3(\alpha-\beta)} + Y^\beta + Y^{\alpha-\beta} + Y^{1+3\beta-2\alpha}\right)\right).
			\numberthis \label{eq:energy_short_length_proof3}
		\end{align*}
		Here we used the assumption that $R_0/\mathfrak{a}$ is bounded.
		
		Finally, by combining \eqref{eq:energy_short_length_proof1}, \eqref{eq:energy_short_length_proof2} and \eqref{eq:energy_short_length_proof3}, we obtain
		\begin{align*}
			\tilde{H}_{n,\ell} &\geq \dfrac{4\pi a(V)}{\ell}n(n-1) + \dfrac{b_\mathcal{M}(W)}{6\ell^4}n(n-1)(n-2)\\
			&\phantom{\geq} - CY^{2-5\alpha}\left(Y^{1 - 3(\alpha - \beta)} + Y^\beta + Y^{\alpha - \beta} + Y^{1 + 3\beta - 2\alpha}\right)\\
			&\phantom{\geq} - CY^{2-5\alpha}\left(\varepsilon + \varepsilon^{-1}Y^{\alpha - 6\beta} + \varepsilon^{-1}Y^{7\alpha - 2 - 12\beta}\right).
		\end{align*}
		We take $\alpha$ and $\beta$ to satisfy the conditions
		\begin{equation*}
			\dfrac{1}{3} < \alpha < \dfrac{2}{5} \quad \textmd{and}\quad \alpha - \dfrac{1}{3} < \beta < \dfrac{7\alpha - 2}{12},
		\end{equation*}
		for which the range for $\beta$ is nonempty. With this choice of parameters and choosing $\varepsilon=\left(Y^{\alpha-6\beta} + Y^{7\alpha-2-12\beta}\right)^{1/2}\gg Y^{2-5\alpha}$, we conclude the proof of Proposition~\ref{prop:energy_short_length}.
	\end{proof}
	
	\subsection{Proof of the lower bound in Theorem~\ref{th:main_result}: division into small boxes}
	
	\label{section:proof_energy_lower_bound}
	
	\begin{proof}[Proof of the lower bound in Theorem~\ref{th:main_result}]
		\sloppy We divide the box $\Lambda_L = [-L/2, L/2]^3$ into $M^3$ smaller boxes of side length $\ell$. We parametrise $M(L)=\lfloor LY^\alpha/\mathfrak{a}\rfloor$, for some $\alpha\in(1/3,2/5)$, and we take $\ell(L)>0$ such that $L = M\ell$. Hence,
		\begin{equation*}
			\lim_{\substack{N\rightarrow\infty\\N/L^3\rightarrow\rho}}\ell(L) = \dfrac{\mathfrak{a}}{Y^\alpha}.
		\end{equation*}
		Let $(B_i)_{1\leq i\leq M^3}$ denote these boxes and let us impose Neumann boundary condition on them, as it only lowers the energy. Furthermore, let us use the notation $\overline{A} \coloneqq \mathbb{R}^3\setminus A$, for $A\subset\mathbb{R}^3$. For $\Psi\in L_\textmd{s}^2(\Lambda_L^N)$ and $k\in\{1,\dots,N\}$, we define
		\begin{equation*}
			c_k \coloneqq \dfrac{1}{M^3}\int_{\Lambda_L^N}\sum_{i=1}^{M^3}\delta_{k,\sum_{n=1}^N\mathds{1}_{B_i}(x_n)}\left|\Psi\right|^2 = \dfrac{1}{M^3}
			\begin{pmatrix}
				N\\
				k
			\end{pmatrix}\sum_{i=1}^{M^3}\int_{B_i^k\times\overline{B_i}^{N-k}}\left|\Psi\right|^2,
		\end{equation*}
		where the second equality is obtained by expanding $1 = \otimes_{i=1}^N(\mathds{1}_{B_i} + \mathds{1}_{\overline{B_i}})$ and using the symmetry of $\Psi$.	The sum $\sum_{i=1}^{M^3}\delta_{k,\sum_{n=1}^N\mathds{1}_{B_i}(x_n)}$ is the number of boxes containing exactly $k$ particles for a given configuration $\mathbf{x}_N=(x_1,\dots,x_N)\in\Lambda^N$, and  $M^3c_k$ is thus the expected number of boxes containing exactly $k$ particles. Using that $\int\left|\Psi\right|^2=1$, we deduce
		\begin{equation*}
			\sum_{k=0}^Nc_k = \int_{\Lambda_L^N}\dfrac{1}{M^3}\sum_{i=1}^{M^3}\sum_{k=0}^N\delta_{k,\sum_{n=1}^N\mathds{1}_{B_i(x_n)}}\left|\Psi\right|^2 = \int_{\Lambda_L^N}\left|\Psi\right|^2=1
		\end{equation*}
		and
		\begin{equation*}
			\sum_{k=0}^Nkc_k = \int_{\Lambda_L^N}\dfrac{1}{M^3}\sum_{i=1}^{M^3}\sum_{k=0}^Nk\delta_{k,\sum_{n=1}^N\mathds{1}_{B_i(x_n)}}\left|\Psi\right|^2 = \dfrac{1}{M^3}\int_{\Lambda_L^N}N\left|\Psi\right|^2=\rho \ell^3.
		\end{equation*}
		
		Let us now deal with the energy. For that, we define $E(\ell,k) \coloneqq \inf\sigma(H_{\ell,k})$. Thanks to a straightforward computation relying on
		\begin{equation*}
			1 = \sum_{i = 1}^{M^3}\mathds{1}_{B_i}(x_1), \quad \mathds{1}_{B_i}(x_1) \geq \mathds{1}_{B_i^2}(x_1,x_2) \quad \textmd{and} \quad \mathds{1}_{B_i}(x_1) \geq \mathds{1}_{B_i^3}(x_1,x_2,x_3)
		\end{equation*}
		and the symmetry of $\Psi$, we have
		\begin{align*}
			\langle\Psi,H_{N,L}\Psi\rangle &\geq  \sum_{i=1}^{M^3}N\int_{B_i\times\Lambda^{N-1}}\left|\nabla_1\Psi(\mathbf{x}_N)\right|^2\\
			&\phantom{\geq} + \sum_{i=1}^{M^3}{N \choose 2}\int_{B_i^2\times\Lambda^{N-2}}V(x_1,x_2)\left|\Psi(\mathbf{x}_N)\right|^2\\
			&\phantom{\geq} + \sum_{i=1}^{M^3}{N \choose 3}\int_{B_i^3\times\Lambda^{N-3}}W(x_1,x_2,x_3)\left|\Psi(\mathbf{x}_N)\right|^2\\
			&= \sum_{i=1}^{M^3}\sum_{k=0}^{N}k{N \choose k}\int_{B_i^{k}\times\overline{B_i}^{N-k}}\left|\nabla_1\Psi(\mathbf{x}_N)\right|^2\\
			&\phantom{=} + \sum_{i=1}^{M^3}\sum_{k=0}^{N}{k \choose 2}{N \choose k}\int_{B_i^{k}\times\overline{B_i}^{N-k}}V(x_1,x_2)\left|\Psi(\mathbf{x}_N)\right|^2\\
			&\phantom{=} + \sum_{i=1}^{M^3}\sum_{k=0}^{N}{k \choose 3}{N \choose k}\int_{B_i^{k}\times\overline{B_i}^{N-k}}W(x_1,x_2,x_3)\left|\Psi(\mathbf{x}_N)\right|^2.
		\end{align*}
		Here we made the abuses of notation $V(x_1,x_2)\coloneqq V(x_1-x_2)$ and $W(x_1,x_2,x_3)\coloneqq W(x_1-x_2,x_1-x_3)$, and used that these expressions are symmetric with respect to exchange of variables. In addition, we let $H^{B_i}_k$ denote the translation of $H_{k,\ell}$ to the box $B_i$ and we define the reduced density matrix
		\begin{multline*}
			\gamma_{i,k}(x_1,\dots,x_k;y_1,\dots,y_k)\\
			\coloneqq {N \choose k}\int_{\overline{B_i}^{N-k}}\Psi(x_1,\dots,x_k,z_{k+1},\dots,z_N)\overline{\Psi(y_1,\dots,y_k,z_{k+1},\dots,z_N)}\d{}\mathbf{z}_{N-k},
		\end{multline*}
		with $\mathbf{z}_{N - k} = (z_{k + 1},\dots, z_N)$. Using that
		\begin{multline}
			\label{eq:lower_bound_trace_ineq}
			\begin{aligned}
				{N \choose k}\int_{B_i^k\times\overline{B_i}^{N-k}}\left[k\left\vert\nabla_1\Psi(\mathbf{x}_N)\right\vert^2 + {k \choose 2}V(x_1-x_2)\left|\Psi(\mathbf{x}_N)\right|^2\right]\\
				+ {N \choose k}\int_{B_i^k\times\overline{B_i}^{N-k}}{k \choose 3}W(x_1-x_2,x_1-x_3)\left|\Psi(\mathbf{x}_N)\right|^2
			\end{aligned}\\
			= \Tr_{B_i^k}H^{B_i}_k\gamma_{i,k}\geq E(\ell,k)\Tr_{B^k_i}\gamma_{i,k}
		\end{multline}
		and that $\sum_{i=1}^{M^3}\Tr_{B^k_i}\gamma_{i,k} = M^3c_k$, we find
		\begin{equation}
			\label{eq:energy_lower_bound_divided_boxes}
			\left<\Psi,H_{N,L}\Psi\right> \geq M^3\sum_{k=0}^Nc_kE(\ell,k).
		\end{equation}
		Note that $\Tr_{B_i^k}H_k^{B_i}\gamma_{i,k}$ is meant in the quadratic form sense and the inequality \eqref{eq:lower_bound_trace_ineq} holds because we imposed Neumann boundary conditions on the boxes $B_i$.
		
		We now split the sum in \eqref{eq:energy_lower_bound_divided_boxes} into two pieces: $0\leq k< 10\rho \ell^3$ and $10\rho \ell^3\leq k$. Using Proposition~\ref{prop:energy_short_length}, we have, for all $k< 10\rho \ell^3$, the inequality
		\begin{equation*}
			E(\ell,k) \geq \dfrac{4\pi a(V)}{\ell^3}k(k-1) + \dfrac{b_\mathcal{M}(W)}{6\ell^6}k(k-1)(k-2) - C\rho^2 \mathfrak{a}\ell^3Y^\nu,
		\end{equation*}
		for some $\nu>0$. On the one hand, by denoting $x = \sum_{k<10\rho \ell^3} c_k k\leq \rho \ell^3$, and using the convexity of $t\mapsto t(t-1)$ and of $t\mapsto t(t-1)(t-2)$, we get
		\begin{equation}
			\label{eq:first_bound_energy}
			\displaystyle\sum_{k< 10\rho \ell^3}c_kE(\ell,k) \geq \dfrac{4\pi a(V)}{\ell^3}x(x-1) + \dfrac{b_\mathcal{M}(W)}{6\ell^6}x(x-1)(x-2) - C\rho^2\mathfrak{a}\ell^3Y^\nu.
		\end{equation}
		On the other hand, thanks to the superadditivity property: $E(\ell,k + k')\geq E(\ell,k) + E(\ell,k')$, we have	
		\begin{multline}
			\sum_{k \geq 10\rho \ell^3}c_kE(\ell,k) \geq  E\left(\ell,\left\lfloor10\rho \ell^3\right\rfloor\right)\sum_{k \geq 10\rho \ell^3}c_k\left\lfloor\dfrac{k}{10\rho \ell^3}\right\rfloor\\
			\geq \dfrac{\rho \ell^3 - x}{2}\bigg[\dfrac{4\pi a(V)}{\ell^3}\left(10\rho \ell^3 - 1\right) + \dfrac{b_\mathcal{M}(W)}{6\ell^6}\left(10\rho \ell^3 - 1\right)\left(10\rho \ell^3 - 2\right) - C\rho\mathfrak{a}Y^\nu\bigg].
			\label{eq:second_bound_energy}
		\end{multline}
		By adding \eqref{eq:first_bound_energy} and \eqref{eq:second_bound_energy}, and noticing that the minimum is attained for $x=\rho \ell^3$, we obtain
		\begin{equation*}
			\dfrac{\left<\Psi,H_{N,L}\Psi\right>}{L^3}
			\geq \dfrac{1}{\ell^3}\bigg[\dfrac{4\pi a(V)}{\ell^3}\rho \ell^3\left(\rho \ell^3 - 1\right) + \dfrac{b_\mathcal{M}(W)}{6\ell^6}\rho \ell^3\left(\rho \ell^3 - 1\right)\left(\rho \ell^3 - 2\right) - C\rho^2\mathfrak{a}\ell^3Y^\nu\bigg].
		\end{equation*}
		Here we used that $L = \ell M$. Recall now that we chose $\ell$ such that $\ell\rightarrow\mathfrak{a}Y^{-\alpha}$ in the thermodynamic limit $N\rightarrow\infty$ and $N/L^3\rightarrow\rho$, which implies $\rho \ell^3\rightarrow Y^{1-3\alpha}\gg1$ in the same limit, for $\alpha > 1/3$. Therefore,
		\begin{equation*}
			\lim_{\substack{N\rightarrow\infty\\N/L^3\rightarrow\rho}}\dfrac{\left<\Psi,H_{N,L}\Psi\right>}{L^3} \geq \left(4\pi a(V)\rho^2 + \dfrac{b_\mathcal{M}(W)}{6}\rho^3\right)\left(1 - 3Y^{3\alpha - 1} + \mathcal{O}(Y^\nu)\right)
		\end{equation*}
		as $Y\rightarrow 0$. This concludes the proof of the lower bound in Theorem~\ref{th:main_result}.
	\end{proof}
	
	\section{Upper bound}
	
	\label{section:energy_upper_bound}
	
	In this section we prove the upper bound
	\begin{equation*}
		e(\rho,V,W) \leq \left(\dfrac{1}{2}b(V)\rho^2 + \dfrac{1}{6}b_\mathcal{M}(W)\rho^3\right)(1 + \mathcal{O}(Y^\nu)) \quad \textmd{when $Y = \rho\mathfrak{a}^3\rightarrow 0$},
	\end{equation*}
	for some $\nu > 0$ and where we recall that we introduced the effective scattering length $\mathfrak{a} = \max(b(V),\rho b_\mathcal{M}(W))$.
	
	\begin{proof}[Proof of the upper bound in Theorem~\ref{th:main_result}]
		We use the trial state
		\begin{equation}
			\label{eq:trial_state_upper_bound}
			\Psi_{N,L} = \prod_{1\leq i<j\leq N}f_{\ell_1}(x_i - x_j)\prod_{1\leq i<j<k\leq N}\tilde{f}_{\ell_2}(x_i - x_j,x_i - x_k),
		\end{equation}
		for some parameters $\ell_1,\ell_2$ that will be fixed later and taken such that $a(V) \ll \ell_1 \ll L$ and $b_\cM(W)^{1/4} \ll \ell_2 \ll L$. The function $f_{\ell_1}$ defined in Lemma~\ref{lemma:truncated_two_body_scattering_solution} describes the two-body correlations up to a distance $\ell_1$ and the function $\tilde{f}_{\ell_2}$ defined in Lemma~\ref{lemma:truncated_three_body_scattering_solution} describes the three-body correlations up to a distance $\ell_2$. Trial states of this form have been first used in \cite{Bijl1940lowestWF,Dingle1949ZPE,Jastrwo1955ManyBody} for systems with two-body interactions and are usually referred to as Jastrow factors. Though Dyson \cite{Dyson1957gseHS} worked with a nonsymmetric trial state describing only nearest neighbour interactions, Jastrow factor trial states can be used to derive the correct leading order of a Bose gas with two-body interactions (see for example \cite{Basti2022gseGP}). Moreover, it was recently shown in \cite{Junge2024gse3B} that a Jastrow factor trial state argument can also be used to derive the correct leading order of a Bose gas with three-body interactions. The key argument from \cite{Junge2024gse3B} is to appropriately use the simple estimate \eqref{eq:three_body_scattering_solution_lower_bound_disentangling} to neglect some of the correlations in \eqref{eq:trial_state_upper_bound} that otherwise prevent us from deriving the correct energy upper bound  \footnote{In \cite{Nam2022ground}, the upper bound on the energy is derived using a localisation argument and second quantisation techniques that require a certain regularity of the interaction potential and that in particular do not cover hard-core interactions. Moreover, in earlier versions of the present paper, available at \href{https://arxiv.org/abs/2402.05646v2}{arXiv:2402.05646v2}, that came out before the key idea from \cite{Junge2024gse3B} was known, the upper bound was also derived using a localisation argument and second quantisation techniques and did not cover hard-core interactions.}. For readability's sake we shall write
		\begin{equation*}
			V_{ij} = V(x_i - x_j), \quad W_{ijk} = W(x_i - x_j,x_i - x_k),
		\end{equation*}
		\begin{equation*}
			f_{ij} = f_{\ell_1}(x_i - x_j), \quad \tilde{f}_{ijk} = \tilde{f}_{\ell_2}(x_i - x_j,x_i - x_k),
		\end{equation*}
		\begin{equation*}
			\nabla_if_{ij} = \nabla_{x_i}f(x_i - x_j) \quad \textmd{and} \quad \nabla_i\tilde{f}_{ijk} = \nabla_{x_i}\tilde{f}_{\ell_2}(x_i - x_j,x_i - x_k).
		\end{equation*}
		Moreover, we define
		\begin{equation*}
			F_N = \prod_{1\leq i<j\leq N}f_{ij} \quad \textmd{and} \quad \tilde{F}_N = \prod_{1\leq i<j<k\leq N}\tilde{f}_{ijk}.
		\end{equation*}
		
		To compute the energy of the trial state \eqref{eq:trial_state_upper_bound}, we first notice that
		\begin{equation*}
			\nabla_1\Psi_{N,L}(x_1,\dots,x_N) = \sum_{p = 2}^N\dfrac{\nabla_1f_{1p}}{f_{1p}}F_N\tilde{F}_N + \sum_{2\leq p < q\leq N}\dfrac{\nabla_1\tilde{f}_{1pq}}{\tilde{f}_{1pq}}F_N\tilde{F}_N.
		\end{equation*}
		Using bosonic symmetry, this implies
		\begin{align*}
			\dfrac{\langle\Psi_{N,L},H_{N,L}\Psi_{N,L}\rangle}{\Vert\Psi_{N,L}\Vert^2} &=
			\begin{multlined}[t]
				N\dfrac{\langle\nabla_1\Psi_{N,L},\nabla_1\Psi_{N,L}\rangle}{\Vert\Psi_{N,L}\Vert^2} + \dfrac{N(N - 1)}{2}\dfrac{\langle\Psi_{N,L},V_{12}\Psi_{N,L}\rangle}{\Vert\Psi_{N,L}\Vert^2}\\
				+ \dfrac{N(N - 1)(N - 2)}{6}\dfrac{\langle\Psi_{N,L},W_{123}\Psi_{N,L}\rangle}{\Vert\Psi_{N,L}\Vert^2}
			\end{multlined}\\
			&= \cI_1 + \cI_2 + \cJ_1 + \cJ_2 + \cK_1 + \cK_2 + \cK_3,
		\end{align*}
		with
		\begin{align*}
			\cI_1 &\ceqq \dfrac{N(N - 1)}{2}\dfrac{\displaystyle \int\d{}\mathbf{x}_N\left(\frac{2\vert\nabla_1f_{12}\vert^2}{f_{12}^2} + V_{12}\right)F_N^2\tilde{F}_N^2}{\int\d{}\mathbf{x}_NF_N^2\tilde{F}_N^2}\\
			\cI_2 &\ceqq N(N - 1)(N - 2)\dfrac{\displaystyle \int\d{}\mathbf{x}_N\frac{\nabla_1f_{12}}{f_{12}}\cdot\frac{\nabla_1f_{13}}{f_{13}}F_N^2\tilde{F}_N^2}{\int\d{}\mathbf{x}_NF_N^2\tilde{F}_N^2}\\
			\cJ_1 &\ceqq 2N(N - 1)(N - 2)\dfrac{\displaystyle \int\d{}\mathbf{x}_N\frac{\nabla_1f_{12}}{f_{12}}\cdot\dfrac{\nabla_1\tilde{f}_{123}}{\tilde{f}_{123}}F_N^2\tilde{F}_N^2}{\int\d{}\mathbf{x}_NF_N^2\tilde{F}_N^2}\\
			\cJ_2 &\ceqq N(N - 1)(N - 2)(N - 3)\dfrac{\displaystyle \int\d{}\mathbf{x}_N\frac{\nabla_1f_{12}}{f_{12}}\cdot\dfrac{\nabla_1\tilde{f}_{134}}{\tilde{f}_{134}}F_N^2\tilde{F}_N^2}{\int\d{}\mathbf{x}_NF_N^2\tilde{F}_N^2}\\
			\cK_1 &\ceqq \dfrac{N(N - 1)(N - 2)}{6}\dfrac{\displaystyle \int\d{}\mathbf{x}_N\left(\dfrac{2\vert\mathcal{M}\nabla\tilde{f}_{123}\vert^2}{\tilde{f}_{123}^2} + W_{123}\right)F_N^2\tilde{F}_N^2}{\int\d{}\mathbf{x}_NF_N^2\tilde{F}_N^2}\\
			\cK_2 &\ceqq N(N - 1)(N - 2)(N - 3)\dfrac{\displaystyle \int\d{}\mathbf{x}_N\dfrac{\nabla_1\tilde{f}_{123}}{\tilde{f}_{123}}\cdot\dfrac{\nabla_1\tilde{f}_{124}}{\tilde{f}_{123}}F_N^2\tilde{F}_N^2}{\int\d{}\mathbf{x}_NF_N^2\tilde{F}_N^2}\\
			\cK_3 &\ceqq \dfrac{N(N - 1)(N - 2)(N - 3)(N - 4)}{4}\dfrac{\displaystyle \int\d{}\mathbf{x}_N\dfrac{\nabla_1\tilde{f}_{123}}{\tilde{f}_{123}}\cdot\dfrac{\nabla_1\tilde{f}_{145}}{\tilde{f}_{145}}F_N^2\tilde{F}_N^2}{\int\d{}\mathbf{x}_NF_N^2\tilde{F}_N^2}.
		\end{align*}
		To rewrite $\cK_1$ we used the identity
		\begin{multline*}
			\vert\nabla_{x_1}\tilde{f}_{\ell_2}(x_1 - x_2,x_1 - x_3)\vert^2 + \vert\nabla_{x_2}\tilde{f}_{\ell_2}(x_1 - x_2,x_1 - x_3)\vert^2\\
			+ \vert\nabla_{x_3}\tilde{f}_{\ell_2}(x_1 - x_2,x_1 - x_3)\vert^2 = 2\vert(\cM\nabla_{\R^6}\tilde{f}_{\ell_2})(x_1 - x_2,x_1 - x_3)\vert^2.
		\end{multline*}
		We only provide the details of the bounds for $\cI_1,\cI_2,\cJ_1,\cK_1$ since the others bounds follow similarly. Thanks to the estimate \eqref{eq:three_body_scattering_solution_lower_bound_disentangling}, we have
		\begin{equation*}
			\prod_{2 \leq j<k\leq N}\tilde{f}_{\ell_2}(x_1 - x_j,x_1 - x_k)^2 \geq \prod_{j = 2}^Ng_{\ell_2}(x_1 - x_j),
		\end{equation*}
		where $\tilde{g}_{\ell_2}(x) = \mathds{1}_{\{\vert x\vert \geq C\ell_2\}}$. Hence, by defining $u_{\ell_1} = 1 - f_{\ell_1}^2$ and $\tilde{v}_{\ell_2} = 1 - g_{\ell_2}$, we have
		\begin{equation*}
			1 - \sum_{j = 2}^Nu_{1j} - \sum_{j = 2}^N\tilde{v}_{1j} \leq \prod_{j = 2}^Nf_{1j}^2\prod_{2\leq j< k\leq N}\tilde{f}_{1jk}^2 \leq 1,
		\end{equation*}
		where we used the short-hand notations $u_{ij} = u_{\ell_1}(x_i - x_j)$ and $\tilde{v}_{ij} = \tilde{v}_{\ell_2}(x_i - x_j)$. This allows us to decouple the variable $x_1$ in the numerator and in the denominator of $\cI_1$ as follows:
		\begin{equation}
			\label{eq:upper_bound_I1}
			\cI_1 \leq 
			\dfrac{N^2}{2}\dfrac{\int\d{}x\left(2\vert\nabla f_{\ell_1}(x)\vert^2 + (Vf_{\ell_1}^2)(x)\right)}{L^3 - CN\int\d{}x(u_{\ell_1}(x)) - CN\int\d{}x(\tilde{v}_{\ell_2}(x))}.
		\end{equation}
		Using an integration by parts, that $f_{\ell_1}$ is solution to \eqref{eq:truncated_scattering_equation}, and the estimates \eqref{eq:truncated_scattering_solution_error_estimate} and \eqref{eq:truncated_scattering_solution_pointwise_estimate}, we have
		\begin{align*}
			\int_{\R^3}\left(2\vert\nabla f_{\ell_1}\vert^2 + (Vf_{\ell_1}^2)\right) = \int_{\R^3}\varepsilon_{\ell_1}f_{\ell_1} = \int_{\R^3}\varepsilon_{\ell_1} - \int_{\R^3}\varepsilon_{\ell_1}\omega_{\ell_1} \leq 8\pi a(V) + C\dfrac{a(V)^2}{\ell_1}.
		\end{align*}
		Combining this with \eqref{eq:upper_bound_I1} and the estimates \eqref{eq:truncated_scattering_solution_pointwise_estimate_2} and \eqref{eq:three_body_scattering_solution_pointwise_estimate2}, we obtain
		\begin{equation*}
			\cI_1 \leq \dfrac{4\pi a(V) \rho N(1 + Ca(V)/\ell_1)}{1 - C\rho a(V)\ell_1^2 - C\rho\ell_2^3} \leq 4\pi a(V)\rho N\left[1 + C\dfrac{a(V)}{\ell_1} +  C\rho a(V)\ell_1^2 + C\rho\ell_2^3\right]
		\end{equation*}
		under the assumption that $a(V)/\ell_1,\rho a(V)\ell_1^2,\rho\ell_2^3 \ll 1$.
		
		To bound $\cI_2$, we similarly wish to decouple the variables $x_1$ and $x_2$ in the numerator and in the denominator. For this, we define $\tilde{u}_{\ell_2} \coloneq 1 - \tilde{f}_{\ell_2}$, which is such that
		\begin{multline}
			\label{eq:estimate_numerator_denominator_decoupling}
			1 - \sum_{j = 2}^Nu_{1j} - \sum_{j=3}^Nu_{2j} - \sum_{j=3}^N\tilde{v}_{1j} - \sum_{j = 3}^N\tilde{v}_{2j} - \sum_{k = 3}^N\tilde{u}_{12k}\\
			\leq \prod_{j=2}^Nf_{1j}^2\prod_{j=3}^Nf_{2j}^2\prod_{3\leq j<k\leq N}\tilde{f}_{1jk}^2\tilde{f}_{2jk}^2\prod_{k = 3}^N\tilde{f}_{12k}^2 \leq 1.
		\end{multline}
		Combining this with the estimates \eqref{eq:truncated_scattering_solution_pointwise_estimate}, \eqref{eq:truncated_scattering_solution_pointwise_estimate_2} and \eqref{eq:three_body_scattering_solution_pointwise_estimate2}, we obtain
		\begin{align*}
			\cI_2 &\leq N^3\dfrac{\left[\int\d{}x\vert\nabla f_{\ell_1}(x)\vert\right]^2}{L^6 - CNL^3\int \d{}x(u_{\ell_1}(x)) - CL^3N\int\d{}x(\tilde{v}_{\ell_2}(x)) - CN\int\d{}\mathbf{x}(\tilde{u}_{\ell_2}(\mathbf{x}))}\\
			&\leq \dfrac{C\rho^2 Na(V)^2\ell_1^2}{1 - C\rho a(V)\ell_1^2 - C\rho\ell_2^3 - C\rho b_\cM(W)\ell_2^2/L^3}\\
			&\leq C\rho^2Na(V)^2\ell_1^2\left[1 + C\rho a(V)\ell_1^2 + C\rho\ell_2^3\right]
		\end{align*}
		when $\rho a(V)\ell_1^2,\rho\ell_2^3 \ll 1$.
		
		We now bound $\cK_1$. Thanks to \eqref{eq:estimate_numerator_denominator_decoupling}, we have
		\begin{equation*}
			\cK_1 \leq \dfrac{N^3}{6}\dfrac{\int\d{}\mathbf{x}\left(2\vert(\cM\nabla\tilde{f})(\mathbf{x})\vert^2 + (W\tilde{f}^2)(\mathbf{x})\right)}{L^6 - CNL^3\int\d{}x(u_{\ell_1}(x)) - CNL^3\int\d{}x(\tilde{v}_{\ell_2}(x)) - CN\int\d{}\mathbf{x}(\tilde{u}_{\ell_3}(\mathbf{x}))}.
		\end{equation*}
		Then, using that $\tilde{f}_{\ell_2}$ solves \eqref{eq:three_body_scattering_equation} and the estimates \eqref{eq:truncated_scattering_solution_pointwise_estimate_2}, \eqref{eq:three_body_scattering_solution_error_estimate} and\eqref{eq:three_body_scattering_solution_pointwise_estimate}, we obtain
		\begin{align*}
			\cK_1 &\leq \dfrac{b_\cM(W)\rho^2N}{6}\dfrac{1 + Cb_\cM(W)/\ell_2^4}{1 - C\rho a(V)\ell_1^2 - C\rho\ell_2^3 - C\rho b_\cM(W)\ell_2^2/L^3}\\
			&\leq \dfrac{1}{6}b_\cM(W)\rho^2N\left[1 + C\rho a(V)\ell_1^2 + C\dfrac{b_\cM(W)}{\ell_2^4} + C\rho\ell_2^3\right],
		\end{align*}
		under the condition that $\rho a(V)\ell_1^2,\rho\ell_2^3 \ll 1$.
		
		To bound $\cJ_1$, we use \eqref{eq:estimate_numerator_denominator_decoupling} with appropriate modifications to decouple the variables $x_1$ and $x_3$ in the numerator and in the denominator. For the numerator, we need to evaluate the double integral
		\begin{equation*}
			\int_{\R^6}\d{}x_1\d{}x_3\vert\nabla_{x_1}f_{\ell_1}(x_1 - x_2)\vert\vert\nabla_{x_1}\tilde{f}_{\ell_2}(x_1 - x_2,x_1 - x_3)\vert
		\end{equation*}
		at fixed $x_2$. By definition of the cut-off at distance $\ell_2$, we can write
		\begin{equation*}
			\nabla_{x_1}\tilde{f}_{\ell_2}(x_1 - x_2,x_1 - x_3) = \mathds{1}_{\{\vert x_2 - x_3\vert \leq C\ell_2\}}\nabla_{x_1}\tilde{f}_{\ell_2}(x_1 - x_2,x_1 - x_3).
		\end{equation*}
		Doing so and using the Cauchy--Schwarz inequality, we are able to integrate the variable $x_3$ and obtain
		\begin{multline*}
			\int_{\R^6}\d{}x_1\d{}x_3\vert\nabla_{x_1}f_{\ell_1}(x_1 - x_2)\vert\vert\nabla_{x_1}\tilde{f}_{\ell_2}(x_1 - x_2,x_1 - x_3)\vert \leq C\eta \ell_2^3\int_{\R^3}\d{}x_1\vert\nabla_{x_1}f_{\ell_1}(x_1 - x_2)\vert^2\\
			+ \dfrac{1}{2\eta}\int_{\R^6}\d{}x_1\d{}x_3\vert\nabla_{x_1}\tilde{f}_{\ell_2}(x_1 - x_2,x_1 - x_3)\vert^2,
		\end{multline*}
		for all $\eta > 0$. Using integrations by parts, the scattering equations \eqref{eq:truncated_scattering_equation} and \eqref{eq:three_body_scattering_equation}, and the estimates \eqref{eq:truncated_scattering_solution_error_estimate} and \eqref{eq:three_body_scattering_solution_error_estimate} as we did when bounding $\cI_1$ and $\cK_1$, we further get
		\begin{equation*}
			\int_{\R^6}\d{}x_1\d{}x_3\vert\nabla_{x_1}f_{\ell_1}(x_1 - x_2)\vert\vert\nabla_{x_1}\tilde{f}_{\ell_2}(x_1 - x_2,x_1 - x_3)\vert \leq \eta \ell_2^3a(V) + C\eta^{-1}b_\cM(W),
		\end{equation*}
		for all $\eta > 0$. Therefore,
		\begin{equation*}
			\cJ_1 \leq CN\rho\mathfrak{a}(\rho\ell_2^3)^{1/2}\left[1 + \rho a(V)\ell_1^2 + \rho\ell_2^3\right].
		\end{equation*}
		
		We can show in a similar way that
		\begin{equation*}
			\cJ_2,\cK_2,\cK_3 \leq  CN\rho\mathfrak{a}\left[1 + \rho a(V)\ell_1^2 + \rho\ell_2^3\right]
		\end{equation*}
		
		Summing up, we have proven that
		\begin{multline*}
			\dfrac{\langle\Psi_{N,L},H_{N,L}\Psi_{N,L}\rangle}{\Vert\Psi_{N,L}\Vert^2} \leq N\left(4\pi a(V)\rho + \dfrac{1}{6}b_\cM(W)\rho^2\right)\\
			\times \left(1 + C\rho a(V)\ell_1^2 + C\dfrac{a(V)}{\ell_1} + C\rho\ell_2^3 + C\dfrac{b_\cM(W)}{\ell_2^4}\right).
		\end{multline*}
		Choosing $\ell_1 = \rho^{-1/3}$ and $\ell_2 = b_\cM(W)^{1/4}(\rho b_\cM(W)^{3/4})^{-1/7}$, dividing by $L^3$ and taking the thermodynamic limit concludes the proof of the upper bound in Theorem~\ref{th:main_result}.
	\end{proof}
	
	\printbibliography

\end{document}